\newtheorem{prop}{Proposition}
\theoremstyle{remark}
\newtheorem{exm}{Example}
\newtheorem{rem}{Remark}
\theoremstyle{definition}
\newtheorem{defin}{Definition}
\def\intr{\mathit{int}}
\def\Pe{\mathcal{P}}
\def\Ka{K}
\def\Be{\mathcal{B}}
\def\<{\langle}
\def\>{\rangle}
\DeclareMathOperator{\degcom}{DegCom}
\DeclareMathOperator{\conv}{\mathit{conv}}
\DeclareMathOperator{\aff}{\mathit{aff}}
\begin{document}

\title{All measurements in a probabilistic theory are compatible if and only if the state space is a simplex}

\author{Martin Pl\'avala}
\affiliation{Mathematical Institute, Slovak Academy of Sciences, \v Stef\' anikova 49, Bratislava, Slovakia}

\begin{abstract}
We study the compatibility of measurements on finite-dimensional compact convex state space in the framework of general probabilistic theory. Our main emphasis is on formulation of necessary and sufficient conditions for two-outcome measurements to be compatible and we use these conditions to show that there exist incompatible measurements whenever the state space is not a simplex. We also formulate the linear programming problem for the compatibility of two-outcome measurements.
\end{abstract}

\maketitle

\section{Introduction}
Incompatibility lies deeply within quantum mechanics and many of the famous and key aspect of quantum theories have been traced to Heisenberg uncertainty principle, no cloning theorem, violations of Bell inequalities and other notions making use of compatibility, see \cite{HeinosaariMiyaderaZiman-compatibility} for recent review. In light of these discoveries compatibility in the framework of general probabilistic theories has been studied \cite{WolfPerezgarciaFernandez-measIncomp, BarnumHowardBarretLeifer-noBroadcast, BuschHeinosaariSchultzStevens-compatibility} in order to show the difference between classical and non-classical probabilistic theories. Also the connection of compatibility and steering in general probabilistic theories have been studied \cite{Banik-steering, UolaMoroder-steering}.

Recently incompatibility of measurements on quantum channels and combs has been in question \cite{SedlakReitznerChiribellaZiman-compatibility} as it potentially could be used as a resource in quantum theory in a similar ways as an incompatibility of measurements on quantum states \cite{HeinosaariMiyaderaZiman-compatibility}. The degree of compatibility (also called robustness of incompatibility) has been studied for measurements on channels and combs \cite{HeinosaariKiukasReitzner-robustness, WolfPerezgarciaFernandez-measIncomp, SedlakReitznerChiribellaZiman-compatibility}.

In the present article we study the notion of compatibility of measurements in the framework of probabilistic theories and we show that every two measurements are compatible if and only if the state space is a simplex. In one way this result has clear physical interpretation - classical state space is always a simplex and the existence of incompatible measurements is often seen as one of the main aspects of quantum theories.

The paper is organized as follows: Sec. \ref{sec:prelim} contains preliminary mathematical results and references. Note that Subsec. \ref{subsec:prelim-faces} contains the definition of maximal face that (to the best knowledge of present author) was not defined elsewhere (even though it has close tie to the notion of tangent half-space and tangent hyper-plane \cite[pp. 169]{Rockafellar-convex}) and is later used in Sec. \ref{sec:inc}. In Sec. \ref{sec:meas} the measurements are defined. In Sec. \ref{sec:inc} compatibility of measurements and degree of compatibility is defined and it is shown that all measurements are compatible if and only if the state space is a simplex. Also the linear program for compatibility of two two-outcome measurements is formulated.

\section{Preliminaries} \label{sec:prelim}
We present preliminary mathematical knowledge used in the paper. In all of the paper $E$ will denote a real, finite dimensional vector space equipped with the Euclidean topology and $\Ka$ will denote non-empty compact convex subset of $E$. We will denote the convex hull of a set $X$ as $\conv(X)$, affine hull of a set $X$ as $\aff(X)$, interior of a set of a set $X$ as $\intr(X)$ and by $\partial \Ka$ we will denote the boundary of $\Ka$, i.e $\partial \Ka = \Ka \setminus \intr(\Ka)$ as $\Ka$ is closed.

\subsection{Structure of $A(\Ka)$} \label{subsec:structure}
By $A(\Ka)$ we will denote the set of real valued affine functions on $\Ka$ and by $A(\Ka)^+$ we will denote the set of positive affine functions on $\Ka$, i.e. $f \in A(\Ka)^+$ if and only if $f(x) \geq 0$ for every $x \in \Ka$. We will denote constant functions by the value they attain. Since $\Ka$ is compact and the functions $A(\Ka)$ are continuous, every function reaches its maximum and minimum over $\Ka$ at some point of $\Ka$ and we can introduce the supremum norm for $f \in A(\Ka)$ as
\begin{equation*}
\Vert f \Vert_A = \sup_{x \in \Ka} |f(x)|.
\end{equation*}

The set $A(\Ka)^+$ is:
\begin{itemize}
\item closed
\item convex, i.e. for $\lambda \in \mathbb{R}$, $0 \leq \lambda \leq 1$,$f_1, f_2 \in A(\Ka)^+$ we have $\lambda f_1 + (1-\lambda)f_2 \in A(\Ka)^+$
\item cone, i.e. for $\nu \in \mathbb{R}$, $f \in A(\Ka)^+$ we have $\nu f_1 \in A(\Ka)^+$
\item pointed, i.e. $A(\Ka)^+ \cap (-A(\Ka)^+) = \{0\}$
\item generating, i.e. for every $f \in A(\Ka)$ we have $f_+, f_- \in A(\Ka)^+$ such that $f = f_+ - f_-$.
\end{itemize}
The closed, pointed, convex cone $A(K)^+$ defines a partial order $\geq$ on $A(\Ka)$ given for $f_1, f_2 \in A(\Ka)$ as
\begin{equation*}
f_1 \geq f_2 \Leftrightarrow f_1 - f_2 \in A(\Ka)^+
\end{equation*}
or equivalently $f_1 \geq f_2 \Leftrightarrow (f_1 - f_2)(x) \geq 0, \forall x \in \Ka$. The partial order $\geq$ will play a role in our formulation of linear program for incompatibility of two-outcome measurements.

\begin{defin}
We say that $e \in A(\Ka)^+$ is an order unit if for every $f \in A(\Ka)^+$ there is some $\nu \in \mathbb{R}$, $\nu > 0$ such that
\begin{equation*}
\nu e \geq f.
\end{equation*}
\end{defin}
In the current setting it is easy to see that every strictly positive function is an order unit. We will omit the simple proof of the following fact.
\begin{prop}
$e \in A(\Ka)^+$ is an order unit if and only if $e \in \intr(A(\Ka)^+)$.
\end{prop}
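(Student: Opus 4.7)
The plan is to show that both conditions---being an order unit and being in $\intr(A(\Ka)^+)$---are equivalent to strict positivity, i.e.\ $e(x) > 0$ for all $x \in \Ka$. Since $\Ka$ is compact and $e$ continuous, strict positivity is equivalent to $m := \min_{x \in \Ka} e(x) > 0$.

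First I would prove that $e \in \intr(A(\Ka)^+)$ iff $e$ is strictly positive. For the ``if'' direction, assuming $m > 0$, take any $g \in A(\Ka)$ with $\Vert g \Vert_A < m$; then $(e+g)(x) \geq e(x) - \Vert g \Vert_A \geq m - \Vert g \Vert_A > 0$, so the open $\Vert \cdot \Vert_A$-ball of radius $m$ around $e$ lies inside $A(\Ka)^+$. For the ``only if'' direction, suppose $e(x_0) = 0$ for some $x_0 \in \Ka$. Then the constant function $\varepsilon \in A(\Ka)$ has $\Vert \varepsilon \Vert_A = \varepsilon$, and $(e - \varepsilon)(x_0) = -\varepsilon < 0$, so every neighborhood of $e$ contains functions outside $A(\Ka)^+$, whence $e \notin \intr(A(\Ka)^+)$.

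Next I would prove that $e$ is an order unit iff $e$ is strictly positive. For the ``if'' direction, given $f \in A(\Ka)^+$, setting $\nu = \Vert f \Vert_A / m$ gives $\nu e(x) - f(x) \geq \nu m - \Vert f \Vert_A = 0$ for every $x \in \Ka$, so $\nu e \geq f$. For the ``only if'' direction, apply the order unit property to the constant function $1 \in A(\Ka)^+$: there is $\nu > 0$ with $\nu e \geq 1$, hence $e(x) \geq 1/\nu > 0$ for all $x \in \Ka$.

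No step here looks like a genuine obstacle; the argument is essentially a compactness-plus-continuity observation. The only small point to keep in mind is that the interior is taken in the ambient finite-dimensional space $A(\Ka)$ (which is why $A(\Ka)^+$ being generating guarantees $\intr(A(\Ka)^+) \neq \emptyset$), and that the supremum norm $\Vert \cdot \Vert_A$ is genuinely a norm on $A(\Ka)$ so the ball argument in the ``if'' part of the first equivalence exhibits an honest neighborhood of $e$.
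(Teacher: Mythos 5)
Your proof is correct, and it fills in exactly the argument the paper leaves out: the paper omits the proof and only remarks that every strictly positive function is an order unit, which is precisely the pivot of your argument (both ``order unit'' and ``interior point of $A(\Ka)^+$'' are equivalent to $\min_{x \in \Ka} e(x) > 0$ by compactness, with the sup-norm ball and the constant test functions handling the interior characterization, and the constant function $1$ handling the converse for order units). The only cosmetic point is the trivial case $f = 0$ in the ``if'' direction of the order-unit equivalence, where your choice $\nu = \Vert f \Vert_A / m$ gives $\nu = 0$; replacing it by, say, $\nu = \max(\Vert f \Vert_A / m, 1)$ keeps $\nu > 0$ as the definition requires.
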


We will also use the notion of a base of a cone.
\begin{defin}
Let $Q \subset E$ be a cone, then a set $\Be \subset Q$ is called base of $Q$ if for every $0 \neq x \in Q$ there exist unique $y \in \Be$ and $\lambda \in \mathbb{R}$ such that $x = \lambda y$.
\end{defin}

To formulate the linear programming problem we will also have to work with the dual space of $A(\Ka)$, we will denote it $A(\Ka)^*$. We will denote by $A(\Ka)^{*+}$ the cone of positive functionals dual to $A(\Ka)^+$, that is $\psi \in A(\Ka)^{*+}$ if and only if for every $f \in A(\Ka)^+$ we have $\psi(f) \geq 0$.
\begin{prop}
$A(\Ka)^{*+}$ is a closed pointed convex cone.
\end{prop}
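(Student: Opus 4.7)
The plan is to verify each of the three properties (closed, convex cone, pointed) directly from the definition of $A(\Ka)^{*+}$ as the dual cone of $A(\Ka)^+$. Each property follows from a standard argument, so the proof should be short and routine; I expect no real obstacle, but I will rely on the ``generating'' property of $A(\Ka)^+$ listed in Subsec.\ \ref{subsec:structure} for pointedness.

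First I would handle the convex cone property: given $\psi_1, \psi_2 \in A(\Ka)^{*+}$ and scalars $\lambda_1, \lambda_2 \geq 0$, for any $f \in A(\Ka)^+$ one has $(\lambda_1 \psi_1 + \lambda_2 \psi_2)(f) = \lambda_1 \psi_1(f) + \lambda_2 \psi_2(f) \geq 0$, so $\lambda_1 \psi_1 + \lambda_2 \psi_2 \in A(\Ka)^{*+}$. This shows $A(\Ka)^{*+}$ is closed under nonnegative linear combinations, hence is a convex cone.

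Next, for closedness I would write $A(\Ka)^{*+} = \bigcap_{f \in A(\Ka)^+} \{\psi \in A(\Ka)^* : \psi(f) \geq 0\}$. Each evaluation map $\psi \mapsto \psi(f)$ is continuous on $A(\Ka)^*$ (with the natural topology of the finite dimensional dual space), so each set in the intersection is closed as the preimage of $[0, \infty)$. An arbitrary intersection of closed sets is closed, so $A(\Ka)^{*+}$ is closed.

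Finally, for pointedness I would take $\psi \in A(\Ka)^{*+} \cap (-A(\Ka)^{*+})$, so that $\psi(f) \geq 0$ and $-\psi(f) \geq 0$ for every $f \in A(\Ka)^+$, giving $\psi(f) = 0$ on the cone. Here I would invoke the generating property: any $f \in A(\Ka)$ decomposes as $f = f_+ - f_-$ with $f_\pm \in A(\Ka)^+$, and by linearity $\psi(f) = \psi(f_+) - \psi(f_-) = 0$. Thus $\psi = 0$ and $A(\Ka)^{*+} \cap (-A(\Ka)^{*+}) = \{0\}$, completing the verification.
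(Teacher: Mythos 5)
Your proof is correct and follows essentially the same route as the paper: the convexity, cone, and closedness properties are checked directly from the dual-cone definition, and pointedness is deduced from the fact that $A(\Ka)^+$ is generating, which is exactly the reason the paper gives. You have merely spelled out the details the paper declares straightforward.
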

\begin{proof}
It is straightforward to see that $A(\Ka)^{*+}$ is closed convex cone. It is pointed because $A(K)^+$ is generating.
\end{proof}

We define the dual norm for $\psi \in A(\Ka)^*$ as
\begin{equation*}
\Vert \psi \Vert_* = \sup_{\Vert f \Vert_A \leq 1} |\psi(f)|.
\end{equation*}

For $x \in \Ka$ let $\Phi_x \in A(\Ka)^*$ be given for $f \in A(\Ka)$ as
\begin{equation*}
\Phi_x (f) = f(x).
\end{equation*}
The map $\Phi: \Ka \to A(\Ka)^*$ is called evaluation map and it is affine. It is easy to see that $\Phi[\Ka] = \{ \Phi_x : x \in \Ka \}$ contains only positive functionals with unit norm such that $\Phi_x(1) = 1$ for every $x \in \Ka$. The converse is also true:
\begin{prop}
$\Phi[\Ka] = \{ \psi \in A(B)^* : \Vert \psi \Vert_* = \psi(1) = 1 \}$.
\end{prop}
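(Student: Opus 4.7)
The forward inclusion $\Phi[\Ka] \subseteq \{\psi \in A(\Ka)^* : \Vert \psi \Vert_* = \psi(1) = 1\}$ is already recorded in the text, so the content of the proposition is the reverse inclusion. Given $\psi$ satisfying $\Vert \psi \Vert_* = \psi(1) = 1$, my plan is first to observe that $\psi$ must automatically be positive, and then to use the separation theorem to exhibit the point $x \in \Ka$ with $\psi = \Phi_x$.

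For positivity, I would take any $f \in A(\Ka)^+$ with $\Vert f \Vert_A \leq 1$. Then $0 \leq f(y) \leq 1$ for all $y \in \Ka$, so $-1 \leq 1 - 2f \leq 1$ pointwise and hence $\Vert 1 - 2f \Vert_A \leq 1$. The norm condition on $\psi$ gives $|\psi(1 - 2f)| \leq 1$, i.e.\ $|1 - 2\psi(f)| \leq 1$, which forces $\psi(f) \in [0,1]$. By homogeneity, $\psi(f) \geq 0$ for every $f \in A(\Ka)^+$.

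Since $E$ is finite-dimensional, so is $A(\Ka)$, and every $f \in A(\Ka)$ is the restriction of an affine function on $E$. Fixing coordinates $\pi_1, \ldots, \pi_n$ on $E$, I would define the candidate point $x \in E$ coordinatewise by $x_i := \psi(\pi_i|_\Ka)$. A short linear-combination computation, together with $\psi(1) = 1$, then gives $\psi(g|_\Ka) = g(x)$ for every affine $g$ on $E$, and hence $\psi = \Phi_x$ on all of $A(\Ka)$ as soon as we know $x \in \Ka$.

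To show $x \in \Ka$, I would argue by contradiction. If $x \notin \Ka$, since $\Ka$ is closed and convex the separation theorem produces an affine $g : E \to \mathbb{R}$ with $g \geq 0$ on $\Ka$ and $g(x) < 0$. Then $g|_\Ka \in A(\Ka)^+$, so the positivity established above forces $\psi(g|_\Ka) \geq 0$; but by construction $\psi(g|_\Ka) = g(x) < 0$, a contradiction. The main obstacle is really the positivity step: the trick $f \mapsto 1 - 2f$ is what couples the unital and norm conditions, and once positivity is in hand the separation argument is entirely routine.
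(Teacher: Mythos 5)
Your argument is correct, but it is a genuinely different route from the paper's: the paper does not prove the proposition at all, it simply cites \cite[Theorem 4.3]{AsimowEllis} (a general representation result for compact convex sets, proved there by functional-analytic means valid beyond finite dimension) and merely remarks, without proof, that $\Vert \psi \Vert_* = \psi(1) = 1$ forces $\psi \geq 0$. You supply exactly that positivity argument (the $f \mapsto 1 - 2f$ trick, after normalizing $f$ so that $\Vert f \Vert_A \leq 1$, plus homogeneity), and then replace the citation by a self-contained finite-dimensional construction: extend affine functions on $\Ka$ to affine functions on $E$ (legitimate here, since an affine function on a convex set extends to its affine hull and then to $E$), read off the candidate point $x$ from the values $\psi(\pi_i|_\Ka)$ on coordinate functions, verify $\psi(g|_\Ka) = g(x)$ for all affine $g$ on $E$ using $\psi(1)=1$, and rule out $x \notin \Ka$ by strict separation of a point from the closed convex set $\Ka$ together with the already-established positivity of $\psi$. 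Note also that well-definedness is not an issue: although the affine extension of $f \in A(\Ka)$ to $E$ is not unique when $\aff(\Ka) \neq E$, the identity $\psi(g|_\Ka) = g(x)$ holds for \emph{every} affine $g$ on $E$ by your coordinatewise definition of $x$, so all extensions give the same value. What your approach buys is elementarity and self-containedness in the finite-dimensional setting, which is all this paper needs; what the citation buys is generality (the Asimow--Ellis theorem covers compact convex subsets of arbitrary locally convex spaces, where one cannot pick coordinates and must instead use weak-$*$ compactness and Hahn--Banach separation in the dual).
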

\begin{proof}
For proof see \cite[Theorem 4.3]{AsimowEllis}. Also note that $\Vert \psi \Vert_* = \psi(1) = 1$ implies $\psi \geq 0$.
\end{proof}
The set $\Phi[\Ka]$ is sometimes referred to as the state space as in general applications it is often easier to work with $\Phi[\Ka]$ rather than $\Ka$.

\subsection{Exposed faces and maximal faces of a convex set} \label{subsec:prelim-faces}
In this subsection we will define faces, exposed faces and maximal faces and prove Prop. \ref{prop:introd-faces-pointInMaximal}.
\begin{defin}
Let $C \subset \Ka$ be a convex set (that is $C$ is a convex set that is subset of $\Ka$). We say that $C$ is a face of $\Ka$ if $x \in C$, $\lambda \in \mathbb{R}$, $0 < \lambda < 1$ and $x = \lambda y + (1-\lambda) z$ implies $y, z \in C$.
\end{defin}
It is straightforward that $\Ka$ and the empty set are a faces of $\Ka$ and they are called the trivial faces. Apart from the trivial faces it is known that all faces lie in $\partial \Ka$ \cite[Corollary 18.1.3]{Rockafellar-convex}. Face consisting of only single point is called extreme point of $\Ka$.
\begin{defin}
Let $C \subset K$ be a set where some affine function $f$ reaches its maximum (or minimum) over $\Ka$, i.e. if $max_{x \in \Ka} f(x) = M_f$, then $C = \{ x \in \Ka : f(x) = M_f \}$. Such $C$ is called exposed face of $\Ka$.
\end{defin}
Every exposed face is a face \cite[pp. 162]{Rockafellar-convex}. An exposed face consisting of only single point will be called exposed point. It will be important that the set of exposed points of $\Ka$ is dense in the set of extreme points of $\Ka$ \cite[Theorem 18.6]{Rockafellar-convex} and that every face of a closed convex set is closed \cite[Corollary 18.1.1]{Rockafellar-convex}. Also note that not every extreme point must be an exposed point, example of this is presented in \cite[pp. 163]{Rockafellar-convex}.

We proceed by defining the notion of maximal face. Maximal faces are generalization of the $n-1$ dimensional exposed faces of polytopes (that is of convex sets that are convex hull of finite number of points).
\begin{defin} \label{def:introd-faces-maximal}
Let $C \subset \Ka$ be a nontrivial face, such that for every $x \in \Ka \setminus C$ we have $\conv(C \cup \{x\}) \cap \intr(\Ka) \neq \emptyset$, then we say that $C$ is a maximal face.
\end{defin}
Note that we require maximal faces to be nontrivial, i.e. $\Ka$ itself is not a maximal face. One can show that every maximal face is exposed, because every maximal face is an intersection of $\Ka$ and a hyper-plane tangent to $\Ka$. Also every intersection of $\Ka$ and hyper-plane tangent to $\Ka$ is a maximal faces. We present a simple example of maximal faces of triangle and circle.
\begin{exm}
Assume that $\Ka \subset \mathbb{R}^2$ is a triangle. The vertices of the triangle are extreme and exposed points of $\Ka$, but they are not maximal faces. In this case maximal faces are the edges of the triangle.

Now consider that $\Ka \subset \mathbb{R}^2$ is the convex hull of the unit circle, then every extreme point of $\Ka$ is a maximal face.
\end{exm}

Maximal faces will play a role in the notion of compatibility of measurements as the condition $\conv(C \cup \{x\}) \cap \intr(\Ka) \neq \emptyset$ will be of great importance. 
\begin{prop} \label{prop:introd-faces-pointInMaximal}
Let $\Ka \subset \mathbb{R}^n$ be a non-empty convex compact set. Then for every point $x \in \partial \Ka$ there are maximal faces $C_1, C_2$ such that $x \in C_1$ and $x \notin C_2$.
\end{prop}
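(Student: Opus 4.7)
The plan is to construct $C_1$ and $C_2$ separately, both via the following reformulation of Definition \ref{def:introd-faces-maximal} which I would establish first: a proper face $C$ of $\Ka$ is a maximal face in the paper's sense if and only if no proper face of $\Ka$ strictly contains $C$. Indeed, for any $y \in \Ka \setminus C$, the smallest face $F_y$ of $\Ka$ containing $C \cup \{y\}$ strictly contains $C$; since proper faces of $\Ka$ lie in $\partial \Ka$ (the Rockafellar fact already cited in the paper), the condition $\conv(C \cup \{y\}) \cap \intr(\Ka) \neq \emptyset$ is equivalent to $F_y = \Ka$. Hence the definition reduces to $C$ being a maximal element of the poset of proper faces ordered by inclusion.

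I would then construct $C_1$ as follows. The supporting hyperplane theorem at $x \in \partial \Ka$ produces a hyperplane $H$ supporting $\Ka$ through $x$; assuming $\Ka$ is full-dimensional in $\mathbb{R}^n$ (otherwise the argument is the same intrinsically in $\aff(\Ka)$, which must have positive dimension for maximal faces to exist at all), one has $\Ka \not\subset H$, so $\Ka \cap H$ is a proper face containing $x$. Since strict inclusion of faces forces strict increase of dimension (a short argument using the defining property of a face applied to a relative interior point of the smaller face), chains of proper faces containing $x$ have bounded length, so this non-empty poset admits a maximal element---a maximal face $C_1$ containing $x$ by the reformulation.

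For $C_2$, I would pick $z_0 \in \intr(\Ka)$; the line through $x$ and $z_0$ meets the compact convex set $\Ka$ in a segment whose other endpoint $x' \in \partial \Ka$ differs from $x$. Applying the $C_1$-construction at $x'$ produces a maximal face $C_2$ containing $x'$. If $x$ were also in $C_2$, convexity would give $z_0 \in [x, x'] \subset C_2$, but $C_2 \subset \partial \Ka$ contradicts $z_0 \in \intr(\Ka)$, so $x \notin C_2$. The main obstacle in this plan is the reformulation from the first paragraph---the equivalence between the interior-intersection condition of Definition \ref{def:introd-faces-maximal} and plain maximality under inclusion; once it is in place, both constructions are routine consequences of the supporting hyperplane theorem and an opposite-point argument.
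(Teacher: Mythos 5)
The pivot of your plan---the equivalence between Definition \ref{def:introd-faces-maximal} and plain maximality under inclusion among proper faces---is true, but the justification you offer only covers the direction you do not use. From ``proper faces lie in $\partial \Ka$'' one gets: if $\conv(C \cup \{y\}) \cap \intr(\Ka) \neq \emptyset$ then your $F_y$ (the smallest face containing $C \cup \{y\}$) cannot be proper, hence $F_y = \Ka$; that shows a maximal face in the paper's sense is inclusion-maximal. Your constructions need the converse: you produce $C_1$ (and $C_2$) as inclusion-maximal proper faces and must then verify the defining condition, i.e.\ that $F_y = \Ka$ forces $\conv(C_1 \cup \{y\}) \cap \intr(\Ka) \neq \emptyset$ for every $y \in \Ka \setminus C_1$. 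Equivalently: every convex subset of $\Ka$ disjoint from $\intr(\Ka)$ is contained in a proper face. This does not follow from faces lying in the boundary; it requires the supporting hyperplane theorem in the form of \cite[Theorem 11.6]{Rockafellar-convex} (a convex subset of $\Ka$ missing the interior lies in a nontrivial supporting hyperplane, whose intersection with $\Ka$ is a proper exposed face), or alternatively the facial fact that the smallest face of $\Ka$ containing a point $z$ is $\Ka$ itself exactly when $z$ is a (relative) interior point, applied to a relative interior point of $\conv(C_1 \cup \{y\})$. This is precisely the tool the paper's own proof invokes, so the gap is easily filled, but as written your key lemma is unsupported in the only direction your argument relies on. (Your full-dimensionality caveat is also more than a convenience: if $\intr(\Ka) = \emptyset$ no maximal faces exist at all, so it is genuinely a standing hypothesis.)

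Once the lemma is repaired, the rest of your plan is correct, and it is worth comparing it with the paper. Your $C_1$ construction is essentially the paper's first step in different clothing: the paper grows exposed faces through $x$ by repeated application of \cite[Theorem 11.6]{Rockafellar-convex} with strictly increasing dimension of the affine hull, while you take a proper face through $x$ of maximal dimension, using the same dimension-monotonicity observation. Your $C_2$ construction, however, is genuinely different and simpler than the paper's: the paper supposes $x$ lies in every maximal face and derives a contradiction with compactness via the representation of $\Ka$ as an intersection of tangent half-spaces, whereas you pass to the second boundary point $x'$ of the chord through $x$ and an interior point $z_0$, apply the $C_1$ construction at $x'$, and note that $x \in C_2$ would place $z_0 \in [x,x'] \subset C_2 \subset \partial \Ka$, a contradiction. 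This antipodal-point argument avoids the half-space representation entirely and is a neat improvement on the second half of the paper's proof.
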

\begin{proof}
We will prove the statement in two steps. As first we will prove that  that every point of $\partial \Ka$ belongs to some maximal face. Then we prove that maximal faces that have a point in common can not form $\partial \Ka$.

Let $x \in \partial \Ka$, then there exists a non-constant affine function $f$ that reaches its maximum over $\Ka$ in $x$ \cite[Colloraly 11.6.2]{Rockafellar-convex}, let $f(x) = M_f$. The set $G_0 = \{ x' \in \Ka : f(x') = M_f \}$ is an exposed face. If $G_0$ is maximal face then we are done, if $G_0$ is not a maximal face, then there must exist a point $y \in \Ka \setminus G_0$ such that $\conv(G_0 \cup \{y\}) \cap \intr(\Ka) = \emptyset$. The set $\conv(G_0 \cup \{y\})$ does not have to be face itself, but since $\conv(G_0 \cup \{y\}) \cap \intr(\Ka) = \emptyset$ then there exists a non-trivial supporting hyperplane to $\Ka$ containing $\conv(G_0 \cup \{y\})$, see \cite[Theorem 11.6]{Rockafellar-convex} for definition of supporting hyperplane to $\Ka$ and proof of the statement. In other words there must exist a non-constant affine function $f_1$ such that $\max_{y \in \Ka} f_1(y) = M_{f_1}$ and $G_1 = \{ x' \in \Ka : f_1(x') =  M_{f_1}\} \supset \conv(G_0 \cup \{y\})$, i.e. $G_1$ is an exposed face of $\Ka$ and $x \in G_1$. Moreover for the dimensions of $\aff(G_0)$ and $\aff(G_1)$ we must have $\dim(\aff(G_1)) > \dim(\aff(G_0))$, because $y \in G_1$ and $y \notin G_0$. If $G_1$ is a maximal face then we are finished, if not then we can repeat the procedure to find exposed face $G_2 \supset G_1$.

Since the affine span of every maximal face can be at most $n-1$ dimensional and the dimension of affine span of the exposed faces $G_i$ is strictly growing with $i$ it is clear that we can repeat this procedure at most $n-1$ times to obtain a maximal face, hence in this way to every $x \in \partial \Ka$ we can find a maximal face that contains it.

Now we will proceed with the second part of the proof. Take $x \in \partial \Ka$, denote $\{ C_i \}$ the set of all maximal faces of $\Ka$ and assume $x \in \cap_{i} C_i$. Since every point of $\partial \Ka$ belongs to some maximal face we must have $\cup_i C_i = \partial \Ka$. Let us define positive affine functions $f_i$, such that $C_i = \{ y \in \Ka : f_i (y) = 0 \}$ then since a finite dimensional convex compact set is an intersection of closed half-spaces tangent to it \cite[Theorem 18.8]{Rockafellar-convex} we have $\Ka = \{ y \in \mathbb{R}^n : f_i (y) \geq 0, \; \forall i \}$. Since we have $f_i(x) = 0$, $\forall i$ then for any $\lambda \in \mathbb{R}$, $\lambda \geq 0$ and $z \in \Ka$ we have
\begin{equation*}
f_i ( \lambda z + (1-\lambda) x) = \lambda f_i(z) \geq 0
\end{equation*}
for every $i$. This implies that $\lambda z + (1-\lambda) x \in \Ka$ which is in contradiction with $\Ka$ being compact.
\end{proof}

\section{Measurements on $\Ka$} \label{sec:meas}
Let $E$ be a finite dimensional real vector space equipped with the Euclidean topology and let $\Ka \subset E$ be a compact convex set. We will call $\Ka$ a state space as it represents a set of all possible states of some system and the convex combination is interpreted as probabilistic mixture. Let $\Omega$ be a nonempty compact Hausdorff space and let $\mathcal{P}(\Omega)$ denote the set of Borel probability measures on $\Omega$.
\begin{defin}
Measurements (also called observables) on $\Ka$ with sample space $\Omega$ are affine mappings $m: \Ka \to \Pe(\Omega)$.
\end{defin}
The interpretation is that $\Omega$ represents all possible outcomes of a certain measurement and is usually referred to as sample space. For $x \in \Ka$ the measure $m(x) \in \Pe (\Omega)$ is a generalized notion of assigning probabilities to the measurement outcomes. Our definition follows the usual definitions of measurements in probabilistic theories \cite{BuschHeinosaariSchultzStevens-compatibility, Banik-steering} but may be easily generalized to locally compact sample spaces $\Omega$. Let $\sigma \subset \Omega$ be a measurable set, then by $m(x; \sigma)$ we will denote the measure of the set $\sigma$ with respect to the measure $m(x)$.

\subsection{Finite outcome measurements}
Let the sample space $\Omega = \{ \omega_1, \ldots, \omega_k \}$ be a finite set. Every Borel probability measure $\mu \in \Pe (\Omega)$ is of the form
\begin{equation*}
\mu = \sum_{i=1}^k \lambda_i \delta_{\omega_i}
\end{equation*}
where $\delta_{\omega_i}$ is the Dirac measure centered at $\omega_i$ and $\lambda_i \in \mathbb{R}$, $0 \leq \lambda_i \leq 1$, $\sum_{i=1}^k \lambda_i = 1$. It follows that if $m$ is a measurement on $\Ka$ with finite sample space $\Omega$ then there always are functions $f_j \in A(\Ka)^+$, $0 \leq f_j \leq 1$ for $j \in \{1, \ldots, k\}$, $\sum_{j=1}^k f_j = 1$ such that
\begin{equation*}
m = \sum_{i=1}^k f_i \delta_{\omega_i}.
\end{equation*}

\begin{rem}
In the standard literature \cite{HeinosaariZiman-MLQT, Holevo-QT} usually it is instead of writing $m = \sum_{i=1}^k f_i \delta_{\omega_i}$ simply said that the function $f_j$ represents the probability of the outcome $\omega_j$. To simplify the notation we will use the formulation presented above.
\end{rem}

\section{Compatibility of measurements} \label{sec:inc}
Assume that we wish to perform two distinct measurements $m_1, m_2$ with two separate sample spaces $\Omega_1, \Omega_2$. We would like to know whether there exists a measurement that performs both $m_1$ and $m_2$ at the same time. To ask this question properly we will introduce the concept of marginal measurement. When working with the Cartesian product $\Omega_1 \times \Omega_2$ we will always consider the product topology on it given by the topologies of $\Omega_1, \Omega_2$.

\begin{defin}
Let $m : \Ka \to \Pe (\Omega_1 \times \Omega_2)$ be a measurement on $\Ka$ with sample space $\Omega_1 \times \Omega_2$. We say that $m_1 : \Ka \to \Pe(\Omega_1)$ is a marginal measurement of $m$ if for every measurable set $\sigma \subset \Omega_1$ and $x \in \Ka$ we have
\begin{equation*}
m_1(x; \sigma) = m(x; \sigma \times \Omega_2).
\end{equation*}
\end{defin}
This definition can be formally understood as
\begin{equation*}
m_1(x; \sigma) = \int_{\Omega_2} m(x; \sigma \times d \omega_2)
\end{equation*}
for every measurable set $\sigma \subset \Omega_1$. For the finite outcome measurements the integral is replaced by a sum over the outcomes, i.e. for $m = \sum_{i,j=1}^k f_{ij} \delta_{(\omega_i, \omega_j)}$, where $f_{ij} \in A(K)^+$ and $\delta_{(\omega_i, \omega_j)}$ is the Dirac measure centered at $(\omega_i, \omega_j)$, we have
\begin{equation*}
m_1 = \sum_{i,j=1}^k f_{ij} \delta_{\omega_i}.
\end{equation*}

It is straightforward to see that $m_1$ is a measurement on $\Ka$ with sample space $\Omega_1$ as the positivity and normalization to 1 follow from the properties of $m$. Now we are ready for the definition of compatibility.

\begin{defin}
We will say that measurements $m_1 : \Ka \to \Pe(\Omega_1)$ and $m_2 : \Ka \to \Pe(\Omega_2)$ are compatible if there exists a measurement $m: \Ka \to \Pe(\Omega_1 \times \Omega_2)$ such that $m_1, m_2$ are marginal measurements of $m$.
\end{defin}

This definition is the standard definition used for compatibility of measurements.

A natural question is: are there any incompatible measurements? It is of course long known that incompatible measurements in quantum mechanics exist, but mathematically it is interesting to ask what properties of $\Ka$ imply that all measurements are compatible.
\begin{prop} \label{prop:inc-simplex}
Let $\Ka$ be a simplex, that is let $\{x_1, \ldots, x_n\}$ be the set of extreme points of $\Ka$ such that the points $x_1, \ldots, x_n$ are affinely independent. Then every measurement on $\Ka$ is compatible with every other measurement on $\Ka$.
\end{prop}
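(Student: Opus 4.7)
The plan is to construct a joint measurement explicitly by exploiting the unique barycentric decomposition that a simplex provides. Since $x_1,\ldots,x_n$ are affinely independent extreme points spanning $K$, every $x \in K$ admits a \emph{unique} decomposition $x = \sum_{i=1}^n \lambda_i(x)\, x_i$ with $\lambda_i(x) \geq 0$ and $\sum_i \lambda_i(x) = 1$; moreover the coefficient functions $\lambda_i : K \to \mathbb{R}$ are affine. This is the only feature of simplices that will be used, and it is the whole engine of the proof.

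Given measurements $m_1 : K \to \mathcal{P}(\Omega_1)$ and $m_2 : K \to \mathcal{P}(\Omega_2)$, I would define a candidate joint measurement $m : K \to \mathcal{P}(\Omega_1 \times \Omega_2)$ on extreme points by the product measure
\begin{equation*}
m(x_i) = m_1(x_i) \otimes m_2(x_i), \qquad i = 1,\ldots,n,
\end{equation*}
and then extend affinely using the barycentric coordinates,
\begin{equation*}
m(x) \;=\; \sum_{i=1}^n \lambda_i(x)\, m_1(x_i) \otimes m_2(x_i).
\end{equation*}
Because $\lambda_i(x) \geq 0$ and $\sum_i \lambda_i(x) = 1$, the right-hand side is a convex combination of Borel probability measures on $\Omega_1 \times \Omega_2$ and hence lies in $\mathcal{P}(\Omega_1 \times \Omega_2)$; the map $x \mapsto m(x)$ is affine because each $\lambda_i$ is. Uniqueness of the barycentric decomposition ensures this formula really defines a function of $x$ (no consistency conditions to check — the obstacle that would bite in a non-simplex).

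It then remains to verify the marginal property. For any measurable $\sigma \subset \Omega_1$,
\begin{equation*}
m(x;\, \sigma \times \Omega_2) = \sum_{i=1}^n \lambda_i(x)\, m_1(x_i;\sigma)\, m_2(x_i;\Omega_2) = \sum_{i=1}^n \lambda_i(x)\, m_1(x_i;\sigma),
\end{equation*}
since $m_2(x_i; \Omega_2) = 1$. Using affinity of $m_1$ and the barycentric representation of $x$, the last sum equals $m_1(x;\sigma)$. The argument for $m_2$ is symmetric. Hence $m_1$ and $m_2$ are both marginals of $m$, proving compatibility.

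The proof has essentially no hard step; the only subtle point is ensuring the extension from extreme points to all of $K$ is unambiguous, and this is precisely what the affine independence hypothesis buys us. In a non-simplex one would face incompatible barycentric representations of the same $x$, which is exactly the obstruction that gets exploited in the converse direction later in the paper.
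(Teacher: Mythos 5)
Your construction is exactly the paper's: the barycentric coordinate functions $\lambda_i$ are the paper's affine functions $b_i$ with $b_i(x_j)=\delta_{ij}$, and the joint measurement $m(x)=\sum_i \lambda_i(x)\, m_1(x_i)\otimes m_2(x_i)$ is the same convex combination of product measures used there. The proposal is correct, and your explicit verification of the marginal property fills in the step the paper leaves as ``easy to verify.''
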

\begin{proof}
Let $\Ka$ be a simplex then there exists affine functions $b_j: \Ka \to \mathbb{R}$, $j \in \{1, \ldots, n\}$ defined by $b_j(x_i) = \delta_{ij}$. These functions are positive, because for every $y \in \Ka$ we have $y = \sum_{i=1}^n \lambda_i x_i$ with $\sum_{i=1}^n \lambda_i = 1$ and $0 \leq \lambda_i \leq 1$ for every $i$.

Let $m$ be a measurement on $\Ka$ with a sample space $\Omega$, then for $y \in \Ka$, $y = \sum_{i=1}^n \lambda_i x_i$ we have
\begin{equation*}
m(y) = \sum_{i=1}^n \lambda_i m(x_i) = \sum_{i=1}^n b_i(y) m(x_i),
\end{equation*}
i.e. a measurement $m$ on simplex is uniquely described by the measures $m(x_i) \in \Pe(\Omega)$.

Now let $m_1, m_2$ be measurements on $\Ka$ with the sample spaces $\Omega_1, \Omega_2$ respectively, then for $y \in \Ka$ we have as above
\begin{equation*}
m_j(y) = \sum_{i=1}^n b_i(y) m_j(x_i),
\end{equation*}
for $j \in \{1, 2\}$. Let $(m_1 \times m_2)(x_i)$ denote the product measure obtained form the measures $m_1(x_i)$ and $m_2(x_i)$, that is for measurable sets $\sigma_i \subset \Omega_i$, $i \in \{1, 2\}$ we have
\begin{equation*}
(m_1 \times m_2)(x_i; \sigma_1 \times \sigma_2) = m_1(x_i; \sigma_1) m_1(x_i; \sigma_2).
\end{equation*}
Let the measurement $m: \Ka \to \Pe(\Omega_1 \times \Omega_2)$ be given as
\begin{equation*}
m(y) = \sum_{i=1}^n b_i(y) (m_1 \times m_2)(x_i)
\end{equation*}
then it is easy to verify that $m_1$ and $m_2$ are marginal measurements of $m$.
\end{proof}
Note that positivity of functions $b_j$ plays a crucial role in the proof and these functions are positive only if $\Ka$ is a simplex. Next we introduce the concept of a coin-toss (also called trivial) measurement.
\begin{defin}
Let $\mu$ be some fixed Borel probability measure on sample space $\Omega$, then by coin-toss we will refer to the measurement given as
\begin{equation*}
m(y) = \mu
\end{equation*}
for every $y \in \Ka$.
\end{defin}
Coin-toss measurements usually represent noise, that is some random factor that affects the measurement outcomes. It can be also interpreted as the most simple measurement when we ignore any information about the state and simply "toss a coin" and return whatever value we obtain. It is straightforward that any coin-toss measurement is compatible with any other measurement.

In the following we state the usual definition of the degree of compatibility.
\begin{defin}
Let $i \in \{1, 2\}$ and let $m_i :\Ka \to \Pe(\Omega_i)$ be a measurement on $\Ka$ with sample space $\Omega_i$. Let $\tau_i : \Ka \to \Pe(\Omega_i)$ be some coin-toss measurements, then we define degree of compatibility of measurements $m_1, m_2$ as
\begin{align*}
\degcom (m_1, m_2) =& \sup_{\substack{0 \leq \lambda \leq 1 \\ \tau_1, \tau_2}} \{ \lambda : \lambda m_1 + (1-\lambda) \tau_1, \\ & \lambda m_2 + (1-\lambda) \tau_2 \; \text{are compatible} \}.
\end{align*}
\end{defin}
The reason for considering different trivial measurements $\tau_1, \tau_2$ is that the sample spaces may be different and even if they would be the same due to our definitions we can not pick some preferred measure as for example properly normed Lebesgue measure on a compact subset of $\mathbb{R}^k$. Note that the supremum is taken also over the coin-toss measurements $\tau_1, \tau_2$.

Based on the analysis of compatibility presented in \cite{HeinosaariMiyaderaZiman-compatibility} we obtain the following:
\begin{prop} \label{prop:inc-minimal-deg}
For any two measurements $m_i :\Ka \to \Pe(\Omega_i)$, $i \in \{1, 2\}$, we have $\degcom(m_1, m_2) \geq \frac{1}{2}$.
\end{prop}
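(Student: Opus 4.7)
The plan is to exhibit explicitly, for each pair $m_1, m_2$, a pair of coin-toss measurements $\tau_1, \tau_2$ and a joint measurement $m$ on $\Omega_1 \times \Omega_2$ whose marginals are $\tfrac{1}{2} m_1 + \tfrac{1}{2} \tau_1$ and $\tfrac{1}{2} m_2 + \tfrac{1}{2} \tau_2$. The idea, which is standard in the quantum/GPT incompatibility literature, is a symmetric "randomize which measurement we perform" construction: with probability $\tfrac12$ we actually measure $m_1$ on the first component and produce junk for the second, and with probability $\tfrac12$ we do the opposite. This makes the first marginal a $50/50$ mixture of $m_1$ with a constant (hence trivial) measurement, and similarly for the second marginal.

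Concretely, I would fix any $\mu_1 \in \Pe(\Omega_1)$ and $\mu_2 \in \Pe(\Omega_2)$ (for instance, $\mu_i = m_i(x_0)$ for some arbitrary $x_0 \in \Ka$) and define $\tau_i(y) = \mu_i$ for all $y \in \Ka$, which are coin-toss measurements by definition. Then on measurable rectangles $\sigma_1 \times \sigma_2 \subset \Omega_1 \times \Omega_2$ I would set
\begin{equation*}
m(y; \sigma_1 \times \sigma_2) = \tfrac{1}{2}\, m_1(y; \sigma_1)\, \mu_2(\sigma_2) + \tfrac{1}{2}\, \mu_1(\sigma_1)\, m_2(y; \sigma_2),
\end{equation*}
and extend by the standard product-measure construction to all Borel subsets of $\Omega_1 \times \Omega_2$. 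For each $y \in \Ka$ this is a convex combination of the two product probability measures $m_1(y) \times \mu_2$ and $\mu_1 \times m_2(y)$, hence a Borel probability measure on $\Omega_1 \times \Omega_2$, and affinity of $m$ in $y$ follows from affinity of $m_1$ and $m_2$; so $m$ is a valid measurement.

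Then I would verify the marginals by direct substitution: taking $\sigma_2 = \Omega_2$ gives $m(y; \sigma_1 \times \Omega_2) = \tfrac12 m_1(y; \sigma_1) + \tfrac12 \mu_1(\sigma_1)$, which is precisely $\bigl(\tfrac12 m_1 + \tfrac12 \tau_1\bigr)(y; \sigma_1)$, and symmetrically for the second marginal. By the definition of $\degcom$, the value $\lambda = \tfrac12$ is therefore admissible, so $\degcom(m_1,m_2) \geq \tfrac12$. The only mildly technical point is the measure-theoretic extension from rectangles to all Borel sets of the product, but since we are taking a convex combination of two genuine product probability measures, this is immediate from standard results on product measures and poses no real obstacle.
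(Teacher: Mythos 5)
Your construction is exactly the one used in the paper: the joint measurement $m(y) = \tfrac{1}{2}\left( m_1(y) \times \mu_2 + \mu_1 \times m_2(y) \right)$ with coin-toss measurements $\tau_i(y) = \mu_i$, whose marginals are $\tfrac12(m_1 + \tau_1)$ and $\tfrac12(m_2 + \tau_2)$. The proposal is correct and follows essentially the same approach, with the marginal verification spelled out slightly more explicitly.
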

\begin{proof}
The idea is that we can always toss a fair two sided coin, based on the result implement one of the measurements and substitute the other by the respective coin-toss observable. In other words let $\mu_1, \mu_2$ be any Borel probability measures on $\Omega_1, \Omega_2$ respectively that give rise to coin-toss measurements $\tau_i$ given as $\tau_i(y) = \mu_i$, $i \in \{1, 2\}$. Consider the measurement $m: \Ka \to \Pe(\Omega_1 \times \Omega_2)$ given for $y \in \Ka$ as
\begin{equation*}
m(y) = \dfrac{1}{2} \left( \mu_1 \times m_2(y) + m_1(y) \times \mu_2 \right).
\end{equation*}
It is straightforward to verify that the measurements $\frac{1}{2}(m_1 + \tau_1)$ and $\frac{1}{2}(m_2 + \tau_2)$ are marginal measurements of $m$.
\end{proof}
Similar result has been observed even for compatibility of quantum channels \cite{HeinosaariMiyadera-compOfChan}.

\subsection{Compatibility of two-outcome measurements} \label{subsec:inc-two-outcome}
In general it may be hard to decide whether measurements $m_1$ and $m_2$ are compatible but in the case of two-outcome measurements, that is in the case when $\Omega_1, \Omega_2$ contain only two points, we will formulate necessary and sufficient conditions for the measurements $m_1, m_2$ to be compatible. These conditions may be generalized in the same manner to general finite outcome measurements.

Let $\Omega_1 = \Omega_2 = \Omega = \{ \omega_1, \omega_2\}$ be the sample space of the measurements $m_1, m_2$, then they are of the form
\begin{align*}
m_i &= f_i \delta_{\omega_1} + (1-f_i)\delta_{\omega_2}
\end{align*}
for $i \in \{1, 2\}$. Also every measurement $m$ on $\Ka$ with sample space $\Omega \times \Omega$ is of the form
\begin{equation*}
m = g_{11} \delta_{(\omega_1, \omega_1)} + g_{12} \delta_{(\omega_1, \omega_2)} + g_{21} \delta_{(\omega_2, \omega_1)} + g_{22} \delta_{(\omega_2, \omega_2)},
\end{equation*}
where $g_{11}, g_{12}, g_{21}, g_{22} \in A(\Ka)^+$ and $\delta_{(\omega_j, \omega_k)}$ is a Dirac measure on $\Omega \times \Omega$ centered at $(\omega_j, \omega_k) \in \Omega \times \Omega$. Assume that $m_1$ and $m_2$ are marginal measurements of $m$, then we obtain
\begin{align}
g_{11} + g_{12} &= f_1, \label{eq:inc-two-outcome-sigmas-1} \\
g_{21} + g_{22} &= 1-f_1, \label{eq:inc-two-outcome-sigmas-2} \\
g_{11} + g_{21} &= f_2, \label{eq:inc-two-outcome-sigmas-3} \\
g_{12} + g_{22} &= 1-f_2. \label{eq:inc-two-outcome-sigmas-4}
\end{align}
These equations imply $g_{11} + g_{12} + g_{21} + g_{22} = 1$, but not $g_{jk} \geq 0$, $j,k \in \{1,2\}$ and they in general don't have a unique solution. Let $g_{11} = p$, $0 \leq p \leq 1$, then a general solution to Eq. \eqref{eq:inc-two-outcome-sigmas-1} - \eqref{eq:inc-two-outcome-sigmas-4} is
\begin{align*}
g_{12} &= f_1 - p, \\
g_{21} &= f_2 - p, \\
g_{22} &= 1 - f_1 - f_2 + p,
\end{align*}
which imply the inequalities
\begin{align}
f_1 &\geq p, \label{eq:inc-two-outcome-ineq-1} \\
f_2 &\geq p, \label{eq:inc-two-outcome-ineq-2} \\
1 + p &\geq f_1 + f_2, \label{eq:inc-two-outcome-ineq-3}
\end{align}
that come from $g_{jk} \geq 0$ for all $j,k \in \{1,2\}$. In general there may not exist such $p$ satisfying Ineq. \eqref{eq:inc-two-outcome-ineq-1} - \eqref{eq:inc-two-outcome-ineq-3}, in that case the measurements are incompatible. But if $m$ is a joint measurement of $m_1, m_2$ then the Ineq. \eqref{eq:inc-two-outcome-ineq-1} - \eqref{eq:inc-two-outcome-ineq-3} must be satisfied and Eq. \eqref{eq:inc-two-outcome-sigmas-1} - \eqref{eq:inc-two-outcome-sigmas-4} are satisfied simply because $m_1$ and $m_2$ are marginals of $m$. We have proved the following:
\begin{prop} \label{prop:inc-two-outcome-iff}
Let $m_1, m_2$ be two-outcome measurements on $\Ka$ given as
\begin{align*}
m_i &= f_i \delta_{\omega_1} + (1-f_i)\delta_{\omega_2}
\end{align*}
for $i \in \{1, 2\}$, then they are compatible if and only if there is a  function $p \in A(\Ka)^+$, such that $0 \leq p \leq 1$ and Ineq. \eqref{eq:inc-two-outcome-ineq-1} - \eqref{eq:inc-two-outcome-ineq-3} are satisfied.
\end{prop}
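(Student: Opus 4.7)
The plan is to prove the two directions separately, essentially reading off the proof from the parametrization already developed in the paragraphs preceding the statement; the content is the bijection between joint measurements $m$ on $\Omega\times\Omega$ with the correct marginals and the single parameter function $p$ together with the three positivity constraints.

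For the \emph{only if} direction, I would assume $m_1, m_2$ are compatible and pick a joint measurement $m = g_{11}\delta_{(\omega_1,\omega_1)} + g_{12}\delta_{(\omega_1,\omega_2)} + g_{21}\delta_{(\omega_2,\omega_1)} + g_{22}\delta_{(\omega_2,\omega_2)}$ witnessing this, with each $g_{jk}\in A(\Ka)^+$. The marginal conditions are exactly Eqs.~\eqref{eq:inc-two-outcome-sigmas-1}--\eqref{eq:inc-two-outcome-sigmas-4}. Setting $p := g_{11}$, these equations force $g_{12}=f_1-p$, $g_{21}=f_2-p$, $g_{22}=1-f_1-f_2+p$, and $p\in A(\Ka)^+$ holds by assumption. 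The constraints $g_{12},g_{21},g_{22}\in A(\Ka)^+$ are then precisely Ineq.~\eqref{eq:inc-two-outcome-ineq-1}--\eqref{eq:inc-two-outcome-ineq-3}. Finally, $0\le p\le 1$ follows because $p=g_{11}\ge 0$ is already positive and $p=g_{11}\le g_{11}+g_{12}+g_{21}+g_{22}=1$.

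For the \emph{if} direction, I would start from a function $p\in A(\Ka)^+$ with $0\le p\le 1$ satisfying the three inequalities, and then just reverse-engineer the joint measurement: define
\begin{equation*}
g_{11} := p, \quad g_{12} := f_1-p, \quad g_{21} := f_2-p, \quad g_{22} := 1-f_1-f_2+p.
\end{equation*}
Affinity of all $g_{jk}$ is automatic since they are affine combinations of $f_1,f_2,p,1$; positivity of $g_{12},g_{21},g_{22}$ is exactly Ineq.~\eqref{eq:inc-two-outcome-ineq-1}--\eqref{eq:inc-two-outcome-ineq-3} and positivity of $g_{11}$ is assumed. A direct check shows $g_{11}+g_{12}+g_{21}+g_{22}=1$, so $m := \sum_{j,k} g_{jk}\delta_{(\omega_j,\omega_k)}$ is a well-defined measurement on $\Ka$ with sample space $\Omega\times\Omega$. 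Computing its marginals one verifies $g_{11}+g_{12}=f_1$ and $g_{11}+g_{21}=f_2$, so the first marginal is $m_1$ and the second is $m_2$, which gives compatibility.

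There is no genuine obstacle here: once the parametrization $p=g_{11}$ is introduced, both implications are routine algebra, and the only mild subtlety is remembering to verify $p\le 1$ in the only-if direction (which is automatic because the three other $g_{jk}$ are nonnegative and the four of them sum to $1$). The substance of the proposition lies entirely in the fact that a joint observable for two two-outcome measurements is determined by a single affine parameter, which is already exhibited before the statement.
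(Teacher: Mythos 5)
Your proposal is correct and follows essentially the same route as the paper: the paper derives the marginal equations \eqref{eq:inc-two-outcome-sigmas-1}--\eqref{eq:inc-two-outcome-sigmas-4}, parametrizes their solutions by $p=g_{11}$, and observes that positivity of the remaining $g_{jk}$ is exactly Ineq.~\eqref{eq:inc-two-outcome-ineq-1}--\eqref{eq:inc-two-outcome-ineq-3}, which is precisely your argument in both directions. Your write-up merely makes the converse construction of the joint measurement explicit, which the paper leaves implicit.
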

Similar results in terms of operators in case of measurements on states were obtained in \cite{HeinosaariKiukasReitzner-robustness, WolfPerezgarciaFernandez-measIncomp}.

Now we will proceed by deriving some conditions on the incompatibility of two-outcome measurements based on the results of Prop. \ref{prop:inc-two-outcome-iff} that will help us prove that there exist incompatible measurements if and only if $\Ka$ is not a simplex.

The main idea is that we will construct two functions $f_1, f_2 \in A(\Ka)^+$ that reach both $0$ and $1$ on $\Ka$ and for the exposed faces
\begin{equation*}
F_i = \{ x \in \Ka : f_i(x) = 0 \},
\end{equation*}
$i \in \{1, 2\}$, it holds that $\conv( F_1 \cup F_2 ) \cap \intr(\Ka) \neq \emptyset$. Then by the Ineq. \eqref{eq:inc-two-outcome-ineq-1} and \eqref{eq:inc-two-outcome-ineq-2} we have that $p(x) = 0$ for every $x \in \conv(F_1 \cup F_2)$. Since $\conv( F_1 \cup F_2 ) \cap \intr(\Ka) \neq \emptyset$ and $p \geq 0$ we get $p = 0$. Then by Ineq. \eqref{eq:inc-two-outcome-ineq-3} we must have $f_1 + f_2 \leq 1$ if the measurements are compatible so we will show that we can construct functions $f_1, f_2$ with the mentioned properties such that $f_1(y) + f_2(y) > 1$ for some $y \in \Ka$ whenever $\Ka$ is not a simplex. 

The ideas presented above were inspired by an example of incompatible measurements on a square presented in \cite{BuschHeinosaariSchultzStevens-compatibility}.

\begin{prop} \label{prop:inc-two-outcome-maximal-face}
Let $x \in \Ka$ be an extreme point and let $F$ be a maximal face disjoint from $\{ x \}$, then there exist incompatible two-outcome measurements on $\Ka$ if $F$ does not contain all other extreme points of $\Ka$ except for $x$.
\end{prop}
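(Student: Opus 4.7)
The plan is to construct two-outcome measurements $m_i = f_i \delta_{\omega_1} + (1-f_i)\delta_{\omega_2}$ whose incompatibility follows from the criterion of Prop.~\ref{prop:inc-two-outcome-iff}. First, set $f_1 \in A(\Ka)^+$ to be the affine function vanishing on $F$ and normalized so that $\max_\Ka f_1 = 1$; this is well-defined since the maximal face $F$ is exposed by a supporting hyperplane. Second, apply Prop.~\ref{prop:introd-faces-pointInMaximal} to the extreme point $y$ granted by the hypothesis ($y \neq x$, $y \notin F$) to obtain a maximal face $F' \ni y$; set $f_2 \in A(\Ka)^+$ to vanish on $F'$ and normalize to $\max_\Ka f_2 = 1$. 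Denote $F_1 := \{f_1 = 0\} = F$ and $F_2 := \{f_2 = 0\} = F'$.

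Two conditions suffice for incompatibility. (i) $\conv(F_1 \cup F_2) \cap \intr(\Ka) \neq \emptyset$: since $y \in F_2 \setminus F$ and $F$ is maximal, the definition of a maximal face (Def.~\ref{def:introd-faces-maximal}) gives $\conv(F \cup \{y\}) \cap \intr(\Ka) \neq \emptyset$, which implies (i) as $\{y\} \subseteq F_2$. (ii) $(f_1 + f_2)(z) > 1$ for some $z \in \Ka$: for any $z$ in the max face $\{f_2 = 1\}$, the sum equals $f_1(z) + 1$, which exceeds $1$ precisely when $z \notin F$. So (ii) reduces to $\{f_2 = 1\} \not\subseteq F$.

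The main obstacle is securing $\{f_2 = 1\} \not\subseteq F$. The hypothesis that the extreme points of $\Ka$ are not all contained in $F \cup \{x\}$ precludes $\Ka$ from being the pyramid $\conv(F \cup \{x\})$, which is the structural obstruction that would force $\{f_2 = 1\} \subseteq F$ for every admissible choice of $F'$. If the initial $F' \ni y$ fails (ii), one would try the other maximal faces through $y$, or swap the roles of $x$ and $y$ and apply Prop.~\ref{prop:introd-faces-pointInMaximal} to $x$ to choose a maximal face through $x$ instead; a case analysis using the extremality of $x, y$ together with the maximality of $F$ then yields a valid $F'$.

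Once (i) and (ii) hold, incompatibility follows by contradiction. A joint measurement for $m_1, m_2$ would, by Prop.~\ref{prop:inc-two-outcome-iff}, yield $p \in A(\Ka)^+$ with $0 \leq p \leq 1$, $p \leq f_i$ for $i = 1, 2$, and $1 + p \geq f_1 + f_2$. From $p \geq 0$ and $p \leq f_i$, $p$ vanishes on each $F_i$, hence on $\conv(F_1 \cup F_2)$ by affinity; by (i), $p$ vanishes at some interior point of $\Ka$, and a nonnegative affine function attaining zero in $\intr(\Ka)$ must be identically zero, so $p \equiv 0$. Then Ineq.~\eqref{eq:inc-two-outcome-ineq-3} collapses to $f_1 + f_2 \leq 1$, contradicting (ii).
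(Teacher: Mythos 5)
Your closing argument (forcing $p \equiv 0$ from condition (i) and then contradicting Ineq.~\eqref{eq:inc-two-outcome-ineq-3}) is correct and matches the paper's reasoning, and condition (i) is secured correctly by the maximality of $F$. The genuine gap is the step you delegate to ``a case analysis'': you never prove that a maximal face $F'$ through $y$ (or $x$) with $\{f_2 = 1\} \not\subseteq F$ exists, and in fact it need not exist. Take $\Ka \subset \mathbb{R}^2$ to be the trapezoid with vertices $a_1 = (0,0)$, $a_2 = (4,0)$, $w_R = (3.5,2)$, $w_L = (0.5,2)$, let $F$ be the long base $[a_1,a_2]$, and let $x, y$ be the two top vertices (this satisfies all hypotheses, and $w_L$ is the only admissible $y$ once $x = w_R$, so re-choosing $y$ is not available either). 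The maximal faces through a top vertex are the top edge and one slanted edge. The function vanishing on the top edge attains its maximum exactly on $F$, and the function vanishing on a slanted edge attains its maximum at the opposite base vertex ($a_1$, resp.\ $a_2$), because the slanted sides splay outward toward the base; e.g.\ for the right edge, $f_2 \propto 8 - 2u - \tfrac{1}{2}v$ is maximal only at $a_1$. So every admissible $F'$ through $x$ or $y$ has $\{f_2 = 1\} \subseteq F$, and the case analysis you sketch cannot terminate. Note that the relevant measurements are nevertheless incompatible there: with $f_1 = v/2$ and $F'$ the right edge, $(f_1 + f_2)(w_L) = 1 + \tfrac{3}{4} > 1$. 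This shows that your reduction of (ii) to ``$\{f_2 = 1\} \not\subseteq F$'' is only a sufficient test and is the wrong quantity to chase: $f_1 + f_2$ can exceed $1$ away from the maximum set of $f_2$, and any repair of your route has to control $\sup_\Ka (f_1 + f_2)$ itself, which you have not done.

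For comparison, the paper closes exactly this gap by a different device: instead of a second maximal face it uses affine functions exposing single points, $f_x$ with zero set $\{x\}$ and $f_z$ with zero set $\{z\}$, where $z$ is an exposed point near the extreme point $y$ (exposed points are dense in extreme points). Assuming \emph{both} pairs $(m_1, m_x)$ and $(m_1, m_z)$ compatible, the same $p \equiv 0$ argument gives $f_1 + f_x \leq 1$ and $f_1 + f_z \leq 1$; evaluating on the nonempty set $\{f_1 = 1\}$ forces $\{f_1 = 1\} = \{x\}$ and $\{f_1 = 1\} = \{z\}$ simultaneously, a contradiction, so at least one of the two pairs is incompatible. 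In this way no point where a particular sum exceeds $1$ ever has to be exhibited, which is precisely the existence statement your proposal leaves unproven.
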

\begin{proof}
For the definition of maximal face see Def. \ref{def:introd-faces-maximal} and remember that according to the definition $\Ka$ itself is not a maximal face. Note that closedness of $\Ka$ will play a role as it implies closedness of every face of $\Ka$ \cite[Corollary 18.1.1]{Rockafellar-convex}.

Assume that there is one maximal face $F$ disjoint from $x$, but $F$ does not contain all extreme points of $\Ka$ except for $x$, i.e. there is an extreme point $y \in \Ka$, such that $y \notin F$ and $y \neq x$. Since $F$, $\{x\}$, $\{y\}$ are closed sets and $\{y\}$ is disjoint from both $F$ and $\{x\}$, then there exists some open neighborhood $N_\varepsilon$ containing $y$, such that $x \notin N_\varepsilon$ and $F \cap N_\varepsilon = \emptyset$. There is an exposed point $z \in N_\varepsilon$ as the set of exposed points is dense in the set of extreme points of $\Ka$ \cite[Theorem 18.6]{Rockafellar-convex}. For the same reason we will consider $x$ an exposed point as well. Now let us construct positive affine function $f_1, f_x, f_z$ such that
\begin{align*}
F_1 &= \{ w \in \Ka : f_1(w) = 0 \}, \\
\{x\} &= \{ w \in \Ka : f_x(w) = 0 \}, \\
\{z\} &= \{ w \in \Ka : f_z(w) = 0 \}, \\
\end{align*}
and
\begin{equation*}
\max_{w \in \Ka} f_1(w) = \max_{w \in \Ka} f_x(w) = \max_{w \in \Ka} f_z(w) = 1.
\end{equation*}
The functions $f_1, f_x, f_z$ give rise to two-outcome measurements $m_1, m_x, m_z$ given as
\begin{align*}
m_1 &= f_1 \delta_{\omega_1} + (1 - f_1) \delta_{\omega_2}, \\
m_x &= f_x \delta_{\omega_1} + (1 - f_x) \delta_{\omega_2}, \\
m_z &= f_z \delta_{\omega_1} + (1 - f_z) \delta_{\omega_2}.
\end{align*}
Since we have
\begin{align*}
\conv( F_1 \cup \{x\}) \cap \intr(\Ka) &\neq \emptyset, \\
\conv( F_1 \cup \{z\}) \cap \intr(\Ka) &\neq \emptyset,
\end{align*}
we must have by Prop. \ref{prop:inc-two-outcome-iff}
\begin{align*}
&f_1 + f_x \leq 1,
&f_1 + f_z \leq 1,
\end{align*}
for the measurements $m_1, m_x$ and $m_1, m_z$ to be compatible. From $f_1 + f_x \leq 1$ we get $\{ w \in \Ka : f_1(w) = 1 \} = \{x\}$ and from $f_1 + f_z \leq 1$ we get $\{ w \in \Ka : f_1(w) = 1 \} = \{z\}$, which is a contradiction with $x \neq z$ implied by $x \notin N_\varepsilon$ and $z \in N_\varepsilon$.
\end{proof}

\begin{prop} \label{prop:inc-two-outcome-simplex}
Let $\Ka \subset \mathbb{R}^n$ be a compact convex set then there exist incompatible measurements on $\Ka$ whenever $\Ka$ is not a simplex.
\end{prop}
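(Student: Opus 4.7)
The plan is to reduce Prop. \ref{prop:inc-two-outcome-simplex} to Prop. \ref{prop:inc-two-outcome-maximal-face} by proving, by induction on $d = \dim(\Ka)$, the combinatorial statement that any non-simplex compact convex $\Ka$ contains an extreme point $x$ and a maximal face $F$ with $x \notin F$ such that $F$ also misses at least one other extreme point of $\Ka$. The base cases $d \leq 1$ are vacuous, since a point and a segment are simplices.

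For the inductive step I would fix a non-simplex $\Ka$ with $d \geq 2$, pick any extreme point $x_0$, and apply Prop. \ref{prop:introd-faces-pointInMaximal} to $x_0 \in \partial \Ka$ to obtain a maximal face $F_0$ of $\Ka$ with $x_0 \notin F_0$. If $F_0$ already fails to contain some extreme point other than $x_0$, the pair $(x_0, F_0)$ is as required. Otherwise the extreme points of $\Ka$ are exactly those of $F_0$ together with $x_0$, so $\Ka = \conv(F_0 \cup \{x_0\})$ is a pyramid with apex $x_0$ over its facet $F_0$. A pyramid over a $(d-1)$-simplex is a $d$-simplex (because $x_0 \notin \aff(F_0)$), so the hypothesis forces $F_0$ to be a non-simplex of dimension $d-1$, and the inductive hypothesis then furnishes an extreme point $v$ of $F_0$ and a maximal face $F'$ of $F_0$ such that $v \notin F'$ and some further extreme point $y \neq v$ of $F_0$ also lies outside $F'$.

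To lift $(v, F')$ to $\Ka$, I would write $F' = F_0 \cap H'$ for a hyperplane $H'$ in $\aff(F_0)$ tangent to $F_0$, and extend $H'$ to the unique hyperplane $\tilde{H}$ of $\aff(\Ka)$ containing $H' \cup \{x_0\}$. Because $F_0$ lies on one side of $H'$ in $\aff(F_0)$ and $x_0 \in \tilde{H}$, the whole pyramid $\Ka$ lies on one side of $\tilde{H}$, so $\tilde{H}$ is tangent to $\Ka$ and $G = \Ka \cap \tilde{H} = \conv(F' \cup \{x_0\})$ is a maximal face of $\Ka$. The points $v$ and $y$ are extreme in $\Ka$ (extremality is inherited through faces) and distinct from $x_0$; if $v = \lambda x_0 + (1-\lambda) w$ for some $w \in F'$ and $\lambda \in [0,1]$, extremality forces $\lambda = 0$ and hence $v = w \in F'$, contradicting $v \notin F'$, and the same argument gives $y \notin G$. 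Thus $(v, G)$ is the required pair, and Prop. \ref{prop:inc-two-outcome-maximal-face} delivers incompatible two-outcome measurements on $\Ka$.

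The main obstacle is the hyperplane-extension step: one must verify that enlarging the tangent hyperplane $H'$ of $F_0$ by $x_0$ produces a hyperplane tangent to the whole of $\Ka$, and that the resulting maximal face $G$ really equals $\conv(F' \cup \{x_0\})$. Both facts rest on the pyramid structure $\Ka = \conv(F_0 \cup \{x_0\})$ and on $x_0 \notin \aff(F_0)$; once they are established, the remaining extremality arguments that expel $v$ and $y$ from $G$ are routine.
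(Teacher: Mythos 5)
Your overall strategy---proving by induction on dimension that a non-simplex $\Ka$ always admits an extreme point $x$ and a maximal face missing both $x$ and a second extreme point, then invoking Prop.~\ref{prop:inc-two-outcome-maximal-face}---is a genuinely different route from the paper, which instead argues the contrapositive in one step: if all two-outcome measurements were compatible, Prop.~\ref{prop:inc-two-outcome-maximal-face} forces every maximal face disjoint from an extreme point to contain all remaining extreme points, so an affine dependence $x=\sum_i\alpha_i y_i$ among extreme points gives $f(x)=\sum_i\alpha_i f(y_i)=0$ for the function exposing such a face, putting $x$ in the face and yielding a contradiction; hence the extreme points are affinely independent and $\Ka$ is a simplex. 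No induction or lifting of faces is needed there.

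There is, however, a genuine gap in your lifting step. From ``$\Ka$ lies on one side of $\tilde H$'' you conclude that $G=\Ka\cap\tilde H$ is a \emph{maximal} face of $\Ka$. A supporting hyperplane only makes $G$ an exposed face, and exposed faces need not be maximal in the sense of Def.~\ref{def:introd-faces-maximal}: a vertex of a square is cut out by a supporting line but is not a maximal face, and in your own construction, if $F'$ were merely an exposed (non-maximal) face of $F_0$---say a single vertex $v'$ of a square base $F_0$ of a square pyramid---then $\conv(F'\cup\{x_0\})$ would be the edge $\conv(\{v',x_0\})$, again exposed by a supporting hyperplane but not maximal. So the inference ``supporting $\Rightarrow$ maximal'' proves too much (the paper's remark about tangent hyperplanes uses ``tangent'' in Rockafellar's restricted sense and cannot be read as ``supporting''). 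The maximality of $G$ is true, but it must be deduced from the maximality of $F'$ in $F_0$, which your argument never uses at this point: given $w\in\Ka\setminus G$, write $w=\lambda u+(1-\lambda)x_0$ with $u\in F_0\setminus F'$, $\lambda>0$; by maximality of $F'$ pick $q\in\conv(F'\cup\{u\})$ in the interior of $F_0$ relative to $\aff(F_0)$; then, using that the interior of the pyramid consists exactly of the points $\mu q'+(1-\mu)x_0$ with $q'$ interior to $F_0$ and $0<\mu<1$, a short computation with convex weights produces a point of $\conv(G\cup\{w\})$ in $\intr(\Ka)$. (Relatedly, since the induction hypothesis is applied to $F_0$ inside $\aff(F_0)$, the interior in Def.~\ref{def:introd-faces-maximal} must be read as the relative interior throughout.) With this verification added, the remaining steps---$x_0\notin\aff(F_0)$, a pyramid over a simplex is a simplex, extreme points of $F_0$ are extreme in $\Ka$, and $v,y\notin G$---are sound, and your induction gives a correct, more constructive alternative to the paper's shorter affine-independence argument.
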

\begin{proof}
We will rely on the results of Prop. \ref{prop:inc-two-outcome-maximal-face}. Assume that $x \in \Ka$ is an extreme point that is affinely dependent on other extreme points, i.e. there are extreme points $\{y_1, \ldots, y_n\} \subset \Ka$ such that $x = \sum_{i=1}^n \alpha_i y_i$ with $\sum_{i=1}^n \alpha_i = 1$ and let $F$ denote the maximal face disjoint from $\{x\}$. Now let us construct a non-constant positive affine function $f \in A(\Ka)^+$ such that
\begin{equation*}
F = \{ z \in \Ka : f(z) = 0 \}.
\end{equation*}
Again the function $f$ exists as $F$ is an exposed face. Since $x = \sum_{i=1}^n \alpha_i y_i$, $\{y_1, \ldots, y_n\} \in F$ and $f$ is affine, we have
\begin{equation*}
f(x) = \sum_{i=1}^n \alpha_i f(y_i) = 0
\end{equation*}
and we must have $x \in F$, which is a contradiction. Hence the set of exposed points must be affinely independent, finite and $\Ka$ must be a simplex.
\end{proof}
It is an open question whether it can be in an easier fashion showed that the compactness and convexity of $\Ka$ together with compatibility of every two-outcome measurement implies the Riesz decomposition property \cite[pp. 84]{Alfsen-convSets} as it is known that it is equivalent to $\Ka$ being a simplex \cite[Corollary II.3.11]{Alfsen-convSets}. It is also known that in more general settings of effect algebras the result does not hold, i.e. there are effect algebras that are compatible but that do no satisfy Riesz Decomposition property, see \cite[Example 3.6]{Jenca-effAlg} for an example.

\subsection{Linear programming problem for compatibility of two-outcome measurements} \label{subsec:inc-linProg}
We will formulate the problem of compatibility of two two-outcome measurements as a problem of linear programming \cite{Barvinok-linProg} similar to the one obtained in \cite{WolfPerezgarciaFernandez-measIncomp}. We will start with the results of Prop. \ref{prop:inc-two-outcome-iff} and we will construct the linear programming problem from there.

Let $m_1, m_2$ be two-outcome measurements with sample space $\Omega = \{ \omega_1, \omega_2 \}$ given as
\begin{align*}
m_i &= f_i \delta_{\omega_1} + (1 - f_i) \delta_{\omega_2}
\end{align*}
for $i \in \{1, 2\}$ and let $\tau$ represent a coin-toss measurement given as
\begin{equation*}
\tau = \dfrac{1}{2} \left( \delta_{\omega_1} + \delta_{\omega_2} \right)
\end{equation*}
In the following calculations we will restrict ourselves only to this special coin-toss observable as it is sufficent to determine whether the measurements $m_1, m_2$ are compatible.

We want to know what is the highest possible $\lambda \in [\frac{1}{2}, 1]$, such that the measurements $\lambda m_1 + (1-\lambda) \tau, \lambda m_2 + (1-\lambda) \tau$ are compatible. In terms of Prop. \ref{prop:inc-two-outcome-iff} we want to know what is the highest value of $\lambda$ such that there exists $\tilde{p} \in A(\Ka)^+$ such that the conditions
\begin{align*}
\lambda f_1 + \dfrac{1-\lambda}{2} &\geq \tilde{p}, \\
\lambda f_2 + \dfrac{1-\lambda}{2} &\geq \tilde{p}, \\
1 + \tilde{p} &\geq \lambda ( f_1 + f_2) + (1-\lambda)
\end{align*}
are satisfied. Denoting $p = \frac{\tilde{p}}{\lambda}$ and $\mu = \frac{1-\lambda}{\lambda}$ we obtain
\begin{align}
\dfrac{\mu}{2} - p &\geq - f_1, \label{eq:inc-linProg-ineq1} \\
\dfrac{\mu}{2} - p &\geq - f_2, \label{eq:inc-linProg-ineq2} \\
p &\geq f_1 + f_2 - 1. \label{eq:inc-linProg-ineq3}
\end{align}
Now it is important to realize that maximizing $\lambda$ is equivalent to minimizing $\mu$. In the following we will introduce new partially ordered vector spaces and a linear map as the problem of linear programming will be formulated in their terms.

Let $x \in \mathbb{R} \times A(\Ka)$, then $x = ( \alpha, g)$ for $\alpha \in \mathbb{R}$ and $g \in A(\Ka)$. We introduce partial ordering on $\mathbb{R} \times A(\Ka)$ by the relation
\begin{equation*}
( \alpha, g) = x \geq 0 \quad \Leftrightarrow \quad \alpha \geq 0, \, g \in A(\Ka)^+.
\end{equation*}
The topological dual to $\mathbb{R} \times A(\Ka)$ is $\mathbb{R} \times A(\Ka)^*$, for $x = ( \alpha, g)$, $\tilde{c} \in \mathbb{R} \times A(\Ka)^*$, $\tilde{c} = (\beta, \psi)$, $\beta \in \mathbb{R}$, $\psi \in A(\Ka)^*$ we have
\begin{equation*}
\< \tilde{c}, x \> = \alpha \beta + \psi(g).
\end{equation*}
We will also use $A(\Ka) \times A(\Ka) \times A(\Ka)$ equipped with the following partial order: let $(g_1, g_2, g_3) \in A(\Ka) \times A(\Ka) \times A(\Ka)$, then $(g_1, g_2, g_3) \geq 0$ if and only if $g_i \geq 0$ for every $i \in \{ 1, 2, 3 \}$.

Let $T: \mathbb{R} \times A(\Ka) \to A(\Ka) \times A(\Ka) \times A(\Ka)$ be a linear map given as
\begin{equation*}
T (\alpha, g) = ( -g + \dfrac{\alpha}{2}, -g + \dfrac{\alpha}{2}, g),
\end{equation*}
where $\frac{\alpha}{2}$ stands for the constant function attaining the value $\frac{\alpha}{2}$. It is straightforward to see that $T$ is linear.

\begin{prop} \label{prop:inc-linProg-primal}
Let $c \in \mathbb{R} \times A(\Ka)^*$, $c = (1, 0)$, $F \in A(\Ka) \times A(\Ka) \times A(\Ka)$, $F = (-f_1, -f_2, f_1 + f_2 - 1)$ and $x \in \mathbb{R} \times A(\Ka)$, $x = (\mu, p)$ then
\begin{align*}
\inf &\< c , x \> \\
x &\geq 0 \\
Tx &\geq F
\end{align*}
is a primal linear programming problem. When the reached minimum is $0$ then the measurement $m_1, m_2$ are compatible. Moreover, there always exists primal feasible plan.
\end{prop}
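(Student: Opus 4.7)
The plan is routine unpacking: I will show the stated data fit the standard primal LP template of \cite{Barvinok-linProg}, verify that the ``minimum equals $0$'' case is exactly the condition of Prop.~\ref{prop:inc-two-outcome-iff}, and exhibit an explicit feasible plan by recycling the joint measurement from the proof of Prop.~\ref{prop:inc-minimal-deg}.

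First I would compute $\< c, x \> = 1 \cdot \mu + 0(p) = \mu$; the cone condition $x \geq 0$ unpacks to $\mu \geq 0$ together with $p \in A(\Ka)^+$; and, component by component, $Tx \geq F$ reproduces exactly the three scalar inequalities \eqref{eq:inc-linProg-ineq1}--\eqref{eq:inc-linProg-ineq3}. This places the problem in primal LP form, with variable in $\mathbb{R} \times A(\Ka)$ and constraint image in the partially ordered space $A(\Ka) \times A(\Ka) \times A(\Ka)$.

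Next, for the compatibility implication: suppose the infimum is attained and equals $0$. Then a feasible $(\mu, p)$ witnessing this has $\mu = 0$, so the three inequalities collapse to $p \leq f_1$, $p \leq f_2$ and $1 + p \geq f_1 + f_2$, with $p \in A(\Ka)^+$; the missing upper bound $p \leq 1$ is automatic from $p \leq f_1 \leq 1$. These are exactly the hypotheses of Prop.~\ref{prop:inc-two-outcome-iff}, yielding compatibility of $m_1$ and $m_2$. For the feasibility statement I would take $x = (1, (f_1 + f_2)/2)$; this is the LP encoding (via $\lambda = 1/2$, $\tilde{p} = \lambda p$) of the coin-toss joint measurement $m(y) = \frac{1}{2}(\tau \times m_2(y) + m_1(y) \times \tau)$ from Prop.~\ref{prop:inc-minimal-deg}, whose $g_{11}$ coefficient equals $(f_1 + f_2)/4 = \tilde{p}$. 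Substituting $\mu = 1$, $p = (f_1 + f_2)/2$ into \eqref{eq:inc-linProg-ineq1}--\eqref{eq:inc-linProg-ineq3} reduces each inequality to an instance of $f_i \leq 1 + f_j$ or $f_1 + f_2 \leq 2$, all immediate from $f_i \in [0,1]$.

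I do not expect a real obstacle; the whole argument is essentially bookkeeping on top of the substitution $p = \tilde{p}/\lambda$, $\mu = (1-\lambda)/\lambda$ already prepared in the text. The one mildly delicate point is the word \emph{reached}: since the feasible region is a closed polyhedral subset of the finite-dimensional space $\mathbb{R} \times A(\Ka)$ whose objective is bounded below by $0$, standard finite-dimensional LP theory guarantees that a finite infimum is attained, so ``reached minimum $0$'' can be read as literal attainment.
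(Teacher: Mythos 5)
Your proof is correct and follows essentially the same route as the paper: unpack $\< c, x \> = \mu$, $x \geq 0$, and $Tx \geq F$ into Ineq. \eqref{eq:inc-linProg-ineq1}--\eqref{eq:inc-linProg-ineq3}, read $\mu = 0$ as $\lambda = 1$ so that Prop.~\ref{prop:inc-two-outcome-iff} gives compatibility, and obtain feasibility from the $\lambda = \tfrac{1}{2}$ construction of Prop.~\ref{prop:inc-minimal-deg} (the paper simply cites that proposition where you write out the explicit point $(1, (f_1+f_2)/2)$, which indeed checks out). One caveat on your closing aside: the feasible set is not polyhedral in general, since $A(\Ka)^+$ need not be a polyhedral cone, so attainment of the infimum should be argued instead from boundedness of sublevel sets (any feasible $(\mu, p)$ with $\mu \leq 1$ satisfies $0 \leq p \leq f_1 + \tfrac{1}{2}$, giving a compact nonempty region); this is harmless here because attainment is a hypothesis of the statement, not something you need to prove.
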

\begin{proof}
The proof is straightforward. We have $\<c, x \> = \mu$ for the given $c$, $x \geq 0$ translates to $\mu \geq 0$ and $p \geq 0$. Note that $\mu \geq 0$ corresponds to $\lambda \leq 1$. $Tx \geq F$ is the same as
\begin{equation*}
( -p + \dfrac{\mu}{2}, -p + \dfrac{\mu}{2}, p ) \geq (-f_1, -f_2, f_1 + f_2 - 1)
\end{equation*}
which is in turn equivalent to conditions \eqref{eq:inc-linProg-ineq1} - \eqref{eq:inc-linProg-ineq3}.

Since $\mu = \frac{1-\lambda}{\lambda}$ then $\mu = 0$ implies $\lambda = 1$. There always exists a primal feasible plan as we know that for $\lambda = \frac{1}{2}$ the measurements are always compatible, see Prop. \ref{prop:inc-minimal-deg}.
\end{proof}

Now that we have the primal problem we will find the dual problem to obtain another condition on the compatibility of measurements $m_1, m_2$.
\begin{prop} \label{prop:inc-linProg-dual}
The dual problem to problem introduced in Prop. \ref{prop:inc-linProg-primal} is given as
\begin{align*}
\sup &\< F , l \> \\
T^* l &\leq c \\
l &\geq 0
\end{align*}
where $l \in A(\Ka)^* \times A(\Ka)^* \times A(\Ka)^*$ and $T^*$ is given by the relation $\< \tilde{l}, T \tilde{x} \> = \< T^* \tilde{l}, \tilde{x} \>$ for every $\tilde{l} \in A(\Ka)^* \times A(\Ka)^* \times A(\Ka)^*$ and $\tilde{x} \in \mathbb{R} \times A(\Ka)$, i.e. $T^*: A(\Ka)^* \times A(\Ka)^* \times A(\Ka)^* \to \mathbb{R} \times A(\Ka)^*$, such that for $(\psi_1, \psi_2, \psi_3) \in A(\Ka)^* \times A(\Ka)^* \times A(\Ka)^*$ we have
\begin{equation*}
T^* ( \psi_1, \psi_2, \psi_3) = \left( \dfrac{1}{2}(\psi_1 + \psi_2)(1), - \psi_1 - \psi_2 + \psi_3 \right)
\end{equation*}
where $1$ stands for the constant function on $\Ka$ and $\psi_i(1)$ is the value of functional $\psi_1$ on this function, that is for some $z_{11}, z_{12} \in \Ka$ and $a_1, a_2 \in \mathbb{R}$, $a_1 \geq 0$, $a_2 \geq 0$ we have $\psi_1 = a_1 \phi_{z_1} - a_2 \phi_{z_2}$ and $\psi_1 (1) = a_1 - a_2$.
\end{prop}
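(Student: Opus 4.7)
The plan is to derive the dual LP from the primal by the standard Lagrangian construction and then to compute the adjoint $T^\ast$ by a direct calculation. First, I would form the Lagrangian $L(x, l) = \<c, x\> - \<l, Tx - F\>$ for a dual variable $l$ in the positive cone of $A(\Ka)^\ast \times A(\Ka)^\ast \times A(\Ka)^\ast$. Rewriting this as $L(x, l) = \<c - T^\ast l, x\> + \<l, F\>$ and applying weak duality --- minimizing over $x \geq 0$ forces $c - T^\ast l \geq 0$, i.e.\ $T^\ast l \leq c$, for the infimum of the Lagrangian in $x$ to be finite --- reads off the dual program as stated. The proposition only asserts that this is the dual, so no constraint qualification needs to be verified here.

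Second, to extract the explicit form of $T^\ast$, I would start from the defining relation $\<l, T(\alpha, g)\> = \<T^\ast l, (\alpha, g)\>$. Writing $l = (\psi_1, \psi_2, \psi_3)$ and expanding the left-hand side using componentwise pairing in the triple product space together with linearity of each $\psi_i$ gives
\begin{equation*}
\psi_1\bigl(-g + \tfrac{\alpha}{2}\bigr) + \psi_2\bigl(-g + \tfrac{\alpha}{2}\bigr) + \psi_3(g) = \tfrac{\alpha}{2}\bigl(\psi_1(1) + \psi_2(1)\bigr) + (-\psi_1 - \psi_2 + \psi_3)(g),
\end{equation*}
where I have used that $\alpha/2$ denotes the constant function $(\alpha/2)\cdot 1$. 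Comparing with $\alpha \beta + \phi(g)$ for $T^\ast l = (\beta, \phi)$ reads off the stated formula for $T^\ast$.

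Finally, the representation $\psi_1 = a_1 \Phi_{z_{11}} - a_2 \Phi_{z_{12}}$ follows because $A(\Ka)^+$ is generating, so every element of $A(\Ka)^\ast$ is a difference of two elements of $A(\Ka)^{\ast+}$, and the proposition characterizing $\Phi[\Ka]$ implies that every positive functional is a nonnegative scalar multiple of some evaluation functional $\Phi_z$ with $z \in \Ka$. Evaluating at the constant function $1$ and using $\Phi_z(1) = 1$ for all $z \in \Ka$ then gives $\psi_1(1) = a_1 - a_2$. The main obstacle I anticipate is merely keeping the sign conventions and cone orientations consistent throughout; the adjoint computation itself is a one-line expansion, and the pairing formula $\<\tilde c, x\> = \alpha\beta + \psi(g)$ from Subsec.~\ref{subsec:structure} fixes all conventions once chosen.
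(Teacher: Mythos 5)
Your proposal is correct and follows essentially the same route as the paper: the explicit computation of $T^*$ from the defining relation $\< \tilde{l}, T\tilde{x}\> = \<T^*\tilde{l}, \tilde{x}\>$ is identical, and your Lagrangian/weak-duality derivation of the dual form is just an unpacking of the standard primal--dual correspondence that the paper cites from Barvinok. Your closing remark on writing $\psi_1 = a_1\Phi_{z_{11}} - a_2\Phi_{z_{12}}$ via the generating property of $A(\Ka)^+$ and the characterization of positive functionals is a harmless (and correct) elaboration of what the paper states without comment.
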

\begin{proof}
The only thing we need to do is to find $T^*$, the rest follows from the relation between primal and dual problems \cite[pp. 163]{Barvinok-linProg}.

From the relation $\< \tilde{l}, T \tilde{x} \> = \< T^* \tilde{l}, \tilde{x} \>$ for $\tilde{l} = (\psi_1, \psi_2, \psi_3) \in A(\Ka)^* \times A(\Ka)^* \times A(\Ka)^*$ and $\tilde{x} = (\alpha, g) \in \mathbb{R} \times A(\Ka)$ we get
\begin{align*}
\< \tilde{l}, T \tilde{x} \> &= \left\langle (\psi_1, \psi_2, \psi_3), \left( -g + \dfrac{\alpha}{2}, -g + \dfrac{\alpha}{2}, g \right) \right\rangle  \\
&= \dfrac{\alpha}{2} (\psi_1 + \psi_2)(1) + (-\psi_1 - \psi_2 + \psi_3)(g) \\
&= \left\langle \left( \dfrac{1}{2}(\psi_1 + \psi_2)(1), - \psi_1 - \psi_2 + \psi_3 \right), ( \alpha, g) \right\rangle \\
&= \< T^* \tilde{l}, \tilde{x} \>.
\end{align*}
\end{proof}

\begin{prop} \label{prop:inc-linProg-zeroGap}
The duality gap between the primal problem given by Prop. \ref{prop:inc-linProg-primal} and the dual problem given by Prop. \ref{prop:inc-linProg-dual} is zero.
\end{prop}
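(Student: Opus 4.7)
The plan is to apply a standard strong duality theorem from finite-dimensional conic linear programming, after verifying a Slater-type constraint qualification. Since $\Ka$ lies in the finite-dimensional vector space $E$, the space $A(\Ka)$ is itself finite-dimensional, so the primal problem of Prop.~\ref{prop:inc-linProg-primal} is an ordinary conic LP and the usual duality machinery applies.

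First I would record weak duality. For any primal-feasible $x \in \mathbb{R} \times A(\Ka)$ and dual-feasible $l \in A(\Ka)^* \times A(\Ka)^* \times A(\Ka)^*$,
\begin{equation*}
\< c, x \> - \< F, l \> = \< c - T^* l, x \> + \< l, Tx - F \> \geq 0,
\end{equation*}
because each summand pairs an element of a convex cone with an element of its dual cone. This already gives $\inf \< c, x \> \geq \sup \< F, l \>$, so it suffices to exclude a strict gap by means of a constraint qualification.

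The crucial step is then to check Slater's condition for the primal, i.e., to exhibit a primal-feasible point $x$ such that both $x$ and $Tx - F$ lie in the \emph{interior} of the respective positive cones. Take $\mu = 6$ and let $p$ be the constant function equal to $2$. Then $\mu > 0$ and $p$ is a strictly positive affine function, hence an order unit of $A(\Ka)^+$, so $x = (\mu, p)$ sits in the interior of the positive cone of $\mathbb{R} \times A(\Ka)$. Using $0 \leq f_1, f_2 \leq 1$,
\begin{equation*}
-p + \tfrac{\mu}{2} + f_i = 1 + f_i \geq 1, \qquad p + 1 - f_1 - f_2 = 3 - f_1 - f_2 \geq 1,
\end{equation*}
so every coordinate of $Tx - F$ dominates the constant function $1$ and is therefore itself an order unit, i.e., lies in $\intr(A(\Ka)^+)$.

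Having produced a strictly feasible primal point, the standard strong duality theorem for finite-dimensional conic linear programming (see, e.g., \cite[pp.~163]{Barvinok-linProg}) yields the coincidence of the primal and dual optima, which is exactly the statement that the duality gap is zero. The only non-routine step is identifying the Slater point; the rest is a direct invocation of the standard result and is not expected to pose any real obstacle.
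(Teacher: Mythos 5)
Your proposal is correct, but it takes a genuinely different route from the paper. You establish zero gap via a Slater-type (strict feasibility) argument: since $0 \leq f_1, f_2 \leq 1$, the explicit point $(\mu, p) = (6, 2)$ is strictly feasible, i.e.\ it lies in the interior of the positive cone of $\mathbb{R} \times A(\Ka)$ and all three coordinates of $Tx - F$ dominate the constant $1$, hence are order units and lie in $\intr(A(\Ka)^+)$; then you invoke the standard finite-dimensional strong-duality theorem for conic programs with an interior feasible point (legitimate here because $A(\Ka)$ is finite dimensional, as you note, and the primal value is finite since $\mu \geq 0$ on the feasible set). The paper instead uses Barvinok's closedness criterion: zero gap follows once a primal feasible plan exists and the cone $Q = \{ (T\tilde{x}, \<c, \tilde{x}\>) : \tilde{x} \geq 0 \}$ is closed, which the paper proves by constructing a compact convex base of the positive cone of $\mathbb{R} \times A(\Ka)$ and checking that the map $\tilde{x} \mapsto (T\tilde{x}, \<c,\tilde{x}\>)$ has trivial kernel on it (Barvinok, Lemma 7.3), with primal feasibility supplied by Prop.~\ref{prop:inc-minimal-deg}. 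Both arguments are valid; yours is shorter because the bounds on $f_1, f_2$ hand you an explicit interior point, and it yields attainment of the dual optimum, while the closed-cone route is the one that works verbatim in Barvinok's more general framework. One small caveat: the page of \cite{Barvinok-linProg} you cite sets up the primal--dual correspondence rather than a Slater-type theorem (Barvinok's stated zero-gap criterion is the closedness of $Q$), so you should either cite a finite-dimensional convex-duality source for the interior-point criterion or flag that finite-dimensionality is what licenses it; the mathematics itself is sound.
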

\begin{proof}
The duality gap is zero if there is a primal feasible plan and the cone
\begin{align*}
Q &= \{ (T \tilde{x}, \<c, \tilde{x} \>) : \tilde{x} \in \mathbb{R} \times A(\Ka), \tilde{x} \geq 0 \}, \\
Q &\subset A(\Ka) \times A(\Ka) \times A(\Ka) \times \mathbb{R},
\end{align*}
where $c = (1, 0)$ as in Prop. \ref{prop:inc-linProg-primal}, is closed \cite[Theorem 7.2]{Barvinok-linProg}. To show that $Q$ is closed we will use the fact that if $V, W$ are topological vector spaces, $Q_V \subset V$ is a cone with compact convex base and $T_V : V \to W$ is a continuous linear transformation, such that $\ker(T_V) \cap Q_V = \{ 0 \}$, then the cone $T_V(Q_V)$ is closed \cite[Lemma 7.3]{Barvinok-linProg}.

Because the cone $A(K)^+$ is generating there exists a base of positive functions $h_1, \ldots h_n$ such that for every $\tilde{h} \in A(K)^+$ we have $\tilde{h} = \sum_{i=1}^n \lambda_i h_i$ for $\lambda_i \geq 0$. We introduce the $L^1$ norm on $A(K)$: for $h' \in A(K)$, $h' = \sum_{i=1}^n \nu_i h_i$ we have $\Vert h' \Vert_{L1} = \sum_{i=1}^n \vert \nu_i \vert$. Note that this norm is an affine function on $A(K)^+$.

We can introduce a norm on $\mathbb{R} \times A(\Ka)$ as follows: let $\tilde{x} = (\alpha, g) \in \mathbb{R} \times A(\Ka)$, then
\begin{equation*}
\Vert \tilde{x} \Vert_{\mathbb{R} \times A(\Ka)} = |\alpha| + \Vert g \Vert_{L1}.
\end{equation*}

The base of the positive cone in $\mathbb{R} \times A(\Ka)$ is the set
\begin{equation*}
\mathcal{\Ka} = \{ \tilde{x} \in \mathbb{R} \times A(\Ka) : \Vert \tilde{x} \Vert_{\mathbb{R} \times A(\Ka)} = 1 \}.
\end{equation*}
$\mathcal{K}$ is compact and convex, because the norm $\Vert \cdot \Vert_{\mathbb{R} \times A(\Ka)}$ is continuous and for $\alpha \geq 0$ and $g \in A(K)^+$ it is affine.

The map $T': \mathbb{R} \times A(\Ka) \to A(\Ka) \times A(\Ka) \times A(\Ka) \times \mathbb{R}$ given as
\begin{equation*}
T' \tilde{x} = ( T\tilde{x}, \<c, \tilde{x} \> )
\end{equation*}
is linear and continuous. If for $(\alpha, g) = \tilde{x} \in \mathbb{R} \times A(\Ka)$ hold that $T' \tilde{x} = 0$ then we have to have $\tilde{x} = (0, 0)$ as $\<c, \tilde{x} \> = 0$ implies $\alpha = 0$ and $T \tilde{x} = (0, 0, 0)$ implies $g = 0$. In conclusion we have $\ker(T') = \{ (0, 0) \}$.

This shows that the cone $Q$ is closed and since we have already showed in Prop. \ref{prop:inc-linProg-primal} that primal feasible plan exists, the duality gap is zero.
\end{proof}

We will proceed with rewriting the dual problem from Prop. \ref{prop:inc-linProg-dual} into a more usable form to obtain necessary and sufficient condition for two two-outcome measurements to be incompatible. We will start from the dual problem stated in Prop. \ref{prop:inc-linProg-dual}. Since $l \in A(\Ka)^* \times A(\Ka)^* \times A(\Ka)^*$ and $l \geq 0$ we must have some $z_1, z_2, z_3 \in \Ka$ and $a_1, a_2, a_3 \in \mathbb{R}$, $a_i \geq 0$, $i \in \{1, 2, 3\}$, such that $l = (a_1 \phi_{z_1}, a_2 \phi_{z_2}, a_3 \phi_{z_3})$ in the formalism of Subsec. \ref{subsec:structure}. From $T^* l \leq c$ we obtain the conditions
\begin{align}
\dfrac{1}{2} ( a_1 + a_2 ) &\leq 1, \label{eq:inc-linProg-dual-ineq1} \\
a_3 \phi_{z_3} &\leq a_1 \phi_{z_1} + a_2 \phi_{z_2}. \label{eq:inc-linProg-dual-ineq2}
\end{align}
Moreover we have
\begin{equation*}
\< F, l \> = -a_1 f_1(z_1) - a_2 f_2(z_2) + a_3 ( f_1 (z_3) + f_2(z_3) - 1 ).
\end{equation*}
Thus we have proved:
\begin{prop}
The two-outcome measurements $m_1, m_2$ corresponding to the functions $f_1, f_2$ are incompatible if and only if there exists positive numbers $a_1, a_2, a_3 \in \mathbb{R}$ and $z_1, z_2, z_3 \in \Ka$ such that Ineq. \eqref{eq:inc-linProg-dual-ineq1}, \eqref{eq:inc-linProg-dual-ineq2} are satisfied and $\< F, l \> > 0$.
\end{prop}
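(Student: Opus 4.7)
My plan is to chain together the three preceding propositions and then unpack the abstract dual variable $l$ via the structure of $A(\Ka)^{*+}$.

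First, I would use Prop.~\ref{prop:inc-linProg-primal} to translate compatibility of $m_1, m_2$ into the statement that the primal infimum is zero. Since any primal-feasible $x = (\mu, p)$ has $\mu \geq 0$, the infimum is automatically nonnegative, so incompatibility is equivalent to the infimum being strictly positive. Next, by the zero duality gap result of Prop.~\ref{prop:inc-linProg-zeroGap}, the primal infimum equals the dual supremum $\sup \{ \<F, l\> : l \geq 0,\ T^* l \leq c \}$. Hence incompatibility is equivalent to the existence of a dual-feasible $l$ with $\<F, l\> > 0$.

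The remaining task is to rewrite this dual feasibility condition in concrete terms, and this is where the main (though small) piece of work lies. I would argue that every $\psi \in A(\Ka)^{*+}$ has the form $a \Phi_z$ with $a \geq 0$ and $z \in \Ka$: because the constant function $1$ is strictly positive and hence an order unit, the case $\psi(1) = 0$ forces $\psi = 0$ (dominate any $f \in A(\Ka)^+$ by $\nu \cdot 1$ for some $\nu > 0$ and use that $A(\Ka)^+$ is generating), whereas if $\psi(1) > 0$ then $\psi/\psi(1)$ is a positive functional of unit dual norm that evaluates to $1$ at $1$, so it coincides with some $\Phi_z$ by the characterisation of $\Phi[\Ka]$ cited earlier. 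Writing $l = (a_1 \Phi_{z_1}, a_2 \Phi_{z_2}, a_3 \Phi_{z_3})$ and plugging into the explicit formula for $T^*$ from Prop.~\ref{prop:inc-linProg-dual}, the constraint $T^* l \leq c = (1, 0)$ becomes exactly Ineq.~\eqref{eq:inc-linProg-dual-ineq1}--\eqref{eq:inc-linProg-dual-ineq2}, while $\<F, l\>$ expands to the displayed expression $-a_1 f_1(z_1) - a_2 f_2(z_2) + a_3 ( f_1(z_3) + f_2(z_3) - 1)$.

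I expect no serious difficulty anywhere. The only step that requires thought is the parametrisation of $A(\Ka)^{*+}$ by positive multiples of evaluation functionals; everything else is direct substitution combined with the strong duality already established in Prop.~\ref{prop:inc-linProg-zeroGap}.
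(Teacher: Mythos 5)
Your proposal is correct and follows essentially the same route as the paper: compatibility is read off from the primal problem of Prop.~\ref{prop:inc-linProg-primal}, strong duality from Prop.~\ref{prop:inc-linProg-zeroGap} transfers this to the dual of Prop.~\ref{prop:inc-linProg-dual}, and the positive dual variables are written as $a_i\Phi_{z_i}$ so that $T^*l \leq c$ becomes Ineq.~\eqref{eq:inc-linProg-dual-ineq1}--\eqref{eq:inc-linProg-dual-ineq2} and $\<F,l\>$ becomes the displayed expression. Your explicit justification that every element of $A(\Ka)^{*+}$ is a nonnegative multiple of an evaluation functional is a detail the paper only gestures at via Subsec.~\ref{subsec:structure}, but the argument is the same.
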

If one would wish to have $\< F, l \> > 0$ it would be a possible first idea to have $f_1(z_1) = f_2(z_2) = 0$ as then only $f_1 (z_3) + f_2(z_3) > 1$ would be required. In this case it would be easy to satisfy the Ineq. \eqref{eq:inc-linProg-dual-ineq2} by suitable choice of $a_3$ whenever $\conv( \{z_1, z_2 \} ) \cap \intr(\Ka) \neq \emptyset$ as then for some $\nu \in [0,1]$ we would have $\nu z_1 + (1-\nu) z_2 \in \intr(\Ka)$ and $\phi_{\nu z_1 + (1-\nu) z_2}$ would be an order unit in $A(\Ka)^*$. Matter of fact, this is exactly the idea we used to prove Prop. \ref{prop:inc-two-outcome-simplex}.

It is worth mentioning that by similar methods of semidefinite programming it was shown that in case of measurements on states that the value of $\< F, l \>$ corresponds to maximal violation of CHSH Bell inequality \cite{WolfPerezgarciaFernandez-measIncomp}.

\section{Conclusions}
Incompatibility of measurements is one of the key aspects of quantum theories and as our results have shown, in finite dimensional cases it only differentiates classical probabilistic theories from general probabilistic theories. The quest for finding some essentially quantum restriction on probabilistic theories also considered in \cite{BarnumHowardBarretLeifer-noBroadcast} is not over as such restriction would probably help us understand quantum theories better and deeper.

It is of course an open question whether such aspect of quantum theories that would differentiate it from other non-classical probabilistic theories exists. It is also an open question whether our results hold also in the infinite dimensional case. Possible approach to generalize our results to infinite dimensional case would be to prove it using Riesz decomposition property and to observe whether the proof may be generalized for infinite-dimensional state space.

\section*{Acknowledgments}
The author is thankful to Anna Jen\v{c}ov\'{a} for helpful and stimulating conversations and to M\'{a}rio Ziman, Teiko Heinosaari and Takayuki Miyadera for sharing their research notes. This research was supported by grant VEGA 2/0069/16.

\bibliography{citations}

\end{document}